\documentclass[a4paper]{article}
\usepackage[affil-it]{authblk}
\usepackage[usenames,dvipsnames]{xcolor}
\usepackage{amsfonts}
\usepackage{amsmath,amsthm,amssymb,dsfont}
\usepackage{enumerate}
\usepackage[english]{babel}
\usepackage{graphicx}	
\usepackage{subcaption}
\usepackage[margin=3cm]{geometry}
\usepackage{url}
\usepackage{todonotes}
\usepackage{bbm}
\usepackage{caption}
\usepackage{subcaption}
\usepackage{tikz}
\usetikzlibrary{chains}
\usetikzlibrary{fit}
\usetikzlibrary{quantikz}
\usepackage{pgflibraryarrows}		
\usepackage{pgflibrarysnakes}		
\usepackage{makecell}
\usepackage{cite}

\usepackage{epsfig}
\usetikzlibrary{shapes.symbols,patterns} 
\usepackage{pgfplots}

\usepackage{hyperref}[breaklinks]
\hypersetup{colorlinks=true,citecolor=blue,linkcolor=blue,filecolor=blue,urlcolor=blue}

\usepackage{nicefrac}
\usepackage{mathtools}

\usepackage{algorithm}
\usepackage{algorithmic}

\tikzset{meter/.append style={draw, inner sep=8, rectangle, font=\vphantom{A}, minimum width=30, line width=.8,
 path picture={\draw[black] ([shift={(.1,.3)}]path picture bounding box.south west) to[bend left=50] ([shift={(-.1,.3)}]path picture bounding box.south east);\draw[black,-latex] ([shift={(0,.1)}]path picture bounding box.south) -- ([shift={(.3,-.1)}]path picture bounding box.north);}}}
 
\theoremstyle{plain}
\newtheorem{theorem}{Theorem}[section]
\newtheorem{lemma}[theorem]{Lemma}

\theoremstyle{definition}
\newtheorem{definition}[theorem]{Definition}
\newtheorem{remark}[theorem]{Remark}

\newcommand*{\E}{\mathbb{E}}
\newcommand*{\ee}{\mathrm{e}}

\newcommand*{\cA}{\mathcal{A}}

\newcommand*{\cG}{\mathcal{G}}
\newcommand*{\cH}{\mathcal{H}}

\newcommand*{\cN}{\mathcal{N}}
\newcommand*{\cM}{\mathcal{M}}

\newcommand*{\cS}{\mathcal{S}}

\newcommand*{\cU}{\mathcal{U}}

\newcommand*{\cX}{\mathcal{X}}
\newcommand*{\cY}{\mathcal{Y}}

\newcommand*{\N}{\mathbb{N}}

\newcommand*{\R}{\mathbb{R}}

\newcommand*{\eps}{\varepsilon}
\newcommand*{\diag}{\mathrm{diag}}

\newcommand*{\id}{\mathrm{id}}

\newcommand*{\tr}{\mathrm{tr}}



\newcommand*{\di}{\mathrm{d}} 

\newcommand{\norm}[1]{\left\lVert#1\right\rVert}
\newcommand*{\trans}{\!\mathsf{T}}

\definecolor{mylightgreen}{RGB}{219,255,192}
\definecolor{mylightblue}{RGB}{240,255,252}
\definecolor{mylightyellow}{RGB}{255,248,180}
\definecolor{mylightorange}{RGB}{252,248,236}
\definecolor{mygraywhite}{RGB}{243, 243,243}
\definecolor{mylightred}{RGB}{255, 209,192}
\definecolor{mylightpink}{RGB}{255,240,252}

\interfootnotelinepenalty=10000     

\allowdisplaybreaks
\title{The power of quantum neural networks}

\author{\normalsize Amira Abbas$^{1,2}$, David Sutter$^{1}$, Christa Zoufal$^{1,3}$, Aurelien Lucchi$^{3}$,\\ Alessio Figalli$^{3}$, and Stefan Woerner$^{1,}$\thanks{\href{mailto:wor@zurich.ibm.com}{wor@zurich.ibm.com}}}
 \affil{\small $^{1}$IBM Quantum, IBM Research -- Zurich\\
 $^{2}$University of KwaZulu-Natal, Durban\\
 \small $^{3}$ETH Zurich
}
\date{}

\begin{document}
\maketitle

\begin{abstract}
Fault-tolerant quantum computers offer the promise of dramatically improving machine learning through speed-ups in computation or improved model scalability. In the near-term, however, the benefits of quantum machine learning are not so clear. Understanding expressibility and trainability of quantum models–-and quantum neural networks in particular–-requires further investigation. In this work, we use tools from information geometry to define a notion of expressibility for quantum and classical models. The effective dimension, which depends on the Fisher information, is used to prove a novel generalisation bound and establish a robust measure of expressibility. We show that quantum neural networks are able to achieve a significantly better effective dimension than comparable classical neural networks. To then assess the trainability of quantum models, we connect the Fisher information spectrum to barren plateaus, the problem of vanishing gradients. Importantly, certain quantum neural networks can show resilience to this phenomenon and train faster than classical models due to their favourable optimisation landscapes, captured by a more evenly spread Fisher information spectrum. Our work is the first to demonstrate that well-designed quantum neural networks offer an advantage over classical neural networks through a higher effective dimension and faster training ability, which we verify on real quantum hardware.
\end{abstract}

\maketitle

\section{Introduction}
The power of a model lies in its ability to fit a variety of functions~\cite{goodfellow2016machine}. In machine learning, power is often referred to as a model's capacity to express different relationships between variables~\cite{baldi2019capacity}. Deep neural networks have proven to be extremely powerful models, capable of capturing intricate relationships by learning from data~\cite{dziugaite2017computing}. Quantum neural networks serve as a newer class of machine learning models that are deployed on quantum computers and use quantum effects such as superposition, entanglement, and interference, to do computation. Some proposals for quantum neural networks include \cite{schuld2018supervised, zoufal2019quantum, romero2017quantum, Dunjko_2018, ciliberto2018quantum, PhysRevResearch.1.033063, schuld2014quest, articlefarhi} and hint at potential advantages, such as speed-ups in training and faster processing. Whilst there has been much development in the growing field of quantum machine learning, a systematic study of the trade-offs between quantum and classical models has yet to be conducted~\cite{aaronson2015read}. In particular, the question of whether quantum neural networks are more powerful than classical neural networks, is still open.

A common way to quantify the power of a model is by its complexity~\cite{inbookvapnik}. In statistical learning theory, the \emph{Vapnik-Chervonenkis} (VC) \emph{dimension} is an established complexity measure, where error bounds on how well a model generalises (i.e., performs on unseen data), can be derived~\cite{VC_dimension71}. Although the VC dimension has attractive properties in theory, computing it in practice is notoriously difficult. Further, using the VC dimension to bound generalisation error requires several unrealistic assumptions, including that the model has access to infinite data~\cite{sontag1998vc, vapnik1994measuring}. The measure also scales with the number of parameters in the model and ignores the distribution of data. Since modern deep neural networks are heavily overparameterised, generalisation bounds based on the VC dimension, and other measures alike, are typically vacuous~\cite{neyshabur2017exploring, arora2018stronger}.

In~\cite{wright2019capacity}, the authors analysed the expressive power of parameterised quantum circuits using \emph{memory capacity}, and found that quantum neural networks had limited advantages over classical neural networks. Memory capacity is, however, closely related to the VC dimension and is thus, subject to similar criticisms. 

We therefore, turn our attention to measures that are calculable in practice and incorporate the distribution of data. In particular, measures such as the \emph{effective dimension} have been motivated from an information-theoretic standpoint and depend on the \emph{Fisher information}; a quantity that describes the geometry of a model's parameter space and is essential in both statistics and machine learning~\cite{Figalli20, rissanen1996fisher, cover}. We argue that the effective dimension is a robust capacity measure through proof of a novel generalisation bound with supporting numerical analyses, and use this measure as a tool to study the power of quantum and classical neural networks. 

Despite a lack of quantitative statements on the power of quantum neural networks, another issue is rooted in the trainability of these models. Often, quantum neural networks suffer from a \emph{barren plateau} phenomenon, wherein the loss landscape is perilously flat, and consequently, parameter optimisation is extremely difficult~\cite{mcclean2018barren}. As shown in~\cite{wang2020noise}, barren plateaus may be noise-induced, where certain noise models are assumed on the hardware. On the other hand, noise-free barren plateaus are circuit-induced, which relates to random parameter initialisation, and methods to avoid them have been explored in \cite{cerezo2020cost, verdon2019learning, volkoff2020large, skolik2020layerwise}. 

A particular attempt to understand the loss landscape of quantum models uses the Hessian in \cite{huembeli2020characterizing}. The Hessian quantifies the curvature of a model's loss function at a point in its parameter space~\cite{bishop1992exact}. Properties of the Hessian matrix, such as its spectrum, provide useful diagnostic information about the trainability of a model~\cite{lecun2012efficient}. It was also discovered that the entries of the Hessian, vanish exponentially in models suffering from a barren plateau~\cite{cerezo2020hessians}. For certain loss functions, the Fisher information matrix coincides with the Hessian of the loss function~\cite{kunstner2019limitations}. Consequently, we examine the trainability of quantum and classical neural networks by analysing the Fisher information matrix, which is incorporated by the effective dimension. In this way, we can explicitly relate the effective dimension to model trainability~\cite{karakida2019universal}.

We find that well-designed quantum neural networks are able to achieve a higher capacity and faster training ability than comparable classical feedforward neural networks.\footnote{Faster training implies a model will reach a lower training error than another comparable model for a fixed number of training iterations. We deem two models comparable if they share the same number of trainable parameters and the same input and output size.} Capacity is captured by the effective dimension, whilst trainability is assessed by leveraging the information-theoretic properties of the Fisher information. Lastly, we connect the Fisher information spectrum to the barren plateau phenomenon and find that a quantum neural network with an easier data encoding strategy, increases the likelihood of encountering a barren plateau, whilst a harder data encoding strategy shows resilience to the phenomenon.\footnote{Easy and hard refer to the ability of a classical computer to simulate the particular data encoding strategy.} The remainder of this work is organised as follows. In Section~\ref{sec_models}, we discuss the types of models used in this study. Section~\ref{sec_inf_geom} introduces the effective dimension from~\cite{Figalli20} and motivates its relevance as a capacity measure by proving a generalisation bound. We additionally relate the Fisher information spectrum to model trainability in Section~\ref{sec_inf_geom}. This link, as well as the power of quantum and classical models, is analysed through numerical experiments in Section~\ref{sec_results}, where the training results are further supported by an implementation on the \texttt{ibmq\_montreal 27-qubit} device.

\section{Quantum neural networks}\label{sec_models}
Quantum neural networks are a subclass of variational quantum algorithms, comprising of quantum circuits that contain parameterised gate operations \cite{schuld2020circuit}. Information is first encoded into a quantum state via a state preparation routine or feature map~\cite{schuld2020effect}. The choice of feature map is usually geared toward enhancing the performance of the quantum model and is typically neither optimised nor trained, though this idea was discussed in~\cite{lloyd2020quantum}. Once data is encoded into a quantum state, a variational model containing parameterised gates is applied and optimised for a particular task \cite{zoufal2019quantum, romero2017quantum, Dunjko_2018, cong2019quantum}. This happens through loss function minimisation, where the output of a quantum model can be extracted from a classical post-processing function that is applied to a measurement outcome. 

\begin{figure}[!htb]
\centering
\begin{tikzpicture}[thick,scale=0.6, every node/.style={scale=0.8}]

\path (-14.2,-2.75) node[anchor=base, fill=white, align=center] {\Large $x$};

\path (-13,-2.5) node[anchor=base, fill=white, align=center, rotate=90] {INPUT};
\draw[->,thick] (-13.3,-3.6)--(-12.9,-3.6);
\draw[->,thick] (-13.3,-1.4)--(-12.9,-1.4);

\path (0,0) node[anchor=base, align=center, minimum size=1cm](base){ }; 
\path (-12,-1) node[minimum size=0.8cm] (ket1) {$\ket{0}\,\,\,$};
\path (-12,-2) node[minimum size=0.8cm] (ket2) {$\ket{0}\,\,\,$};
\path (-12,-3) node[minimum size=0.8cm] (dots) {\vdots};
\path (-12,-4) node[minimum size=0.8cm] (ketn) {$\ket{0}\,\,\,$};

\draw[-] (-11.8,-1)--(-11,-1);
\draw[-] (-11.8,-2)--(-11,-2);
\draw[-] (-11.8,-3)--(-11,-3);
\draw[-] (-11.8,-4)--(-11,-4);

\path (-9,-2.6) node[draw,thick,anchor=base, align=center, minimum width=3cm, minimum height =3cm] (U1) {\large $\mathcal{U}_x \ket{0}^{\otimes S} = \ket{\psi_x}$};
\path (-9,-5.8) node[anchor=base, align=center] (label1){\large feature map};
\path (-9,-6.4) node[anchor=base, align=center] (label1){\large $\mathcal{U}_x$};

\draw[-] (-7,-1)--(-6,-1);
\draw[-] (-7,-2)--(-6,-2);
\draw[-] (-7,-3)--(-6,-3);
\draw[-] (-7,-4)--(-6,-4);

\path (-3.7,-2.6) node[draw,thick,anchor=base, align=center, minimum width=3.5cm, minimum height =3cm] (G1) {\large $\mathcal{G}_\theta \ket{\psi_x} = \ket{g_\theta(x)}$};
\path (-3.7,-5.8) node[anchor=base, align=center] (label1){\large variational model};
\path (-3.7,-6.4) node[anchor=base, align=center] (label1){\large $\mathcal{G}_\theta$};

\draw[-] (-1.4,-1)--(-0.5,-1);
\draw[-] (-1.4,-2)--(-0.5,-2);
\draw[-] (-1.4,-3)--(-0.5,-3);
\draw[-] (-1.4,-4)--(-0.5,-4);

\node[meter](meter) at (0.2,-1) {};
\node[meter](meter) at (0.2,-2) {};
\node[meter](meter) at (0.2,-3) {};
\node[meter](meter) at (0.2,-4) {};

\path (1.5,-1) node[minimum size=0.8cm] (out1) {$z_1$};
\path (1.5,-2) node[minimum size=0.8cm] (out2) {$z_2$};
\path (1.5,-3) node[minimum size=0.8cm] (dots) {\vdots};
\path (1.5,-4) node[minimum size=0.8cm] (outn) {$z_S$};

\path (2.8,-2.5) node[anchor=base, fill=white, align=center, rotate=90] {OUTPUT};
\draw[->,thick] (2.5,-3.9)--(2.9,-3.9);
\draw[->,thick] (2.5,-1.1)--(2.9,-1.1);

\path (1,-5.8) node[anchor=base, align=center] (label1){\large measurement};
\path (1,-6.4) node[anchor=base, align=center] (label1){\large $f(z) = y$};
\path (3.4,-2.75) node[anchor=base, fill=white, align=center] {\Large $y$};
\draw[dotted] (-0.8,-0.1) rectangle (2.2,-4.8); 
\end{tikzpicture}
\caption{\textbf{Overview of the quantum neural network} used in this study. The input $x \in \R^{s_{\mathrm{in}}}$ is encoded into an $S$-qubit Hilbert space by applying the feature map $\ket{\psi_x}:=\cU_x \ket{0}^{\otimes S}$. This state is then evolved via a variational form $\ket{g_{\theta}(x)}:=\cG_{\theta}\ket{\psi_x}$, where the parameters $\theta \in \Theta$ are chosen to minimise a certain loss function. Finally a measurement is performed whose outcome $z=(z_1,\ldots,z_S)$ is post-processed to extract the output of the model $y:=f(z)$. }\label{fig_qnn_model_intro}
\end{figure}

The model we use is depicted in Figure \ref{fig_qnn_model_intro}. It encodes classical data $x \in \R^{s_{\mathrm{in}}}$ into an $S$-qubit Hilbert space using the feature map $\cU_x$ proposed in \cite{havlivcek2019supervised}. First, Hadamard gates are applied to each qubit. Then, normalised feature values of the data are encoded using $\mathrm{RZ}$-gates with rotation angles equal to the feature values of the data. This is then accompanied by $\mathrm{RZZ}$-gates that encode higher orders of the data, i.e. the controlled rotation values depend on the product of feature values. The $\mathrm{RZ}$ and $\mathrm{RZZ}$-gates are then repeated.\footnote{In general, these encoding operations can be repeated by an arbitrary amount. The amount of repetitions is termed the \emph{depth} of the feature map.} Once data is encoded, the model optimises a variational circuit $\cG_{\theta}$ containing parameterised $\mathrm{RY}$-gates with $\mathrm{CNOT}$ entangling layers between every pair of qubits, where $\theta \in \Theta$ denotes the trainable parameters. The post-processing step measures all qubits in the $\sigma_z$ basis and classically computes the parity of the output bit strings. For simplicity, we consider binary classification, where the probability of observing class $0$ corresponds to the probability of seeing even parity and similarly, for class $1$ with odd parity. The reason for the choice of this model architecture is two-fold: the feature map is motivated in~\cite{havlivcek2019supervised} to serve as a useful data embedding strategy that is believed to be difficult to simulate classically as the depth and width increase\footnote{This is conjectured to be difficult for depth $\geq 2$.}, which we find adds substantial power to a model (as seen in Section~\ref{sec_capacity}); and the variational form aims to create more expressive circuits for quantum algorithms \cite{sim2019expressibility}. Detailed information about the circuit implementing the quantum neural network is contained in Appendix~\ref{appendix_models}. 

We benchmark this quantum neural network against classical feedforward neural networks with full connectivity and consider all topologies for a fixed number of trainable parameters.\footnote{Networks with and without biases and different activation functions are explored. In particular, \texttt{RELU}, \texttt{leaky RELU}, \texttt{tanh} and \texttt{sigmoid} activations are considered. We keep the number of hidden layers and neurons per layer variable and initialise with random weights sampled from $[-1, 1]^d$.} We also adjust the feature map of the quantum neural network to investigate how data encoding impacts capacity and trainability. We use a simple feature map that is easy to reproduce classically and thus, refer to it as an \emph{easy quantum model}.\footnote{We use a straightforward angle encoding scheme, where data points are encoded via $\mathrm{RY}$-gates on each qubit without entangling them, with rotations equal to feature values normalised to $[-1, 1]$. See Appendix~\ref{appendix_models} for further details.} 

\section{Information geometry, effective dimension, and trainability of quantum neural networks}\label{sec_inf_geom}

We approach the notion of complexity from an information geometry perspective. In doing so, we are able to rigorously define measures that apply to both classical and quantum models, and subsequently use them to study the capacity and trainability of neural networks.

\subsection{The Fisher information}\label{sec_fisher_info}
The Fisher information presents itself as a foundational quantity in a variety of fields, from physics to computational neuroscience~\cite{frieden_2004}. It plays a fundamental role in complexity from both a computational and statistical perspective \cite{rissanen1996fisher}. In computational learning theory, it is used to measure complexity according to the principle of minimum description length \cite{grunwald2007minimum}. We focus on a statistical interpretation, which is synonymous with model capacity: a quantification of the class of functions a model can fit \cite{goodfellow2016machine}. 

A way to assess the information gained by a particular parameterisation of a statistical model is epitomised by the Fisher information. By defining a neural network as a statistical model, we can describe the joint relationship between data pairs $(x,y)$ as $p(x,y; \theta) = p(y|x;\theta)p(x)$ for all $x\in\cX \subset \R^{s_{\mathrm{in}}}$, $y\in \cY\subset \R^{s_{\mathrm{out}}}$ and $\theta \in \Theta \subset [-1,1]^d$.\footnote{This is achieved by applying an appropriate post-processing function in both classical and quantum networks. In the classical network, we apply a softmax function to the last layer. In the quantum network, we obtain probabilities based on the post-processing parity function. Both techniques are standard in practice.} The input distribution, $p(x)$ is a prior distribution and the conditional distribution, $p(y|x;\theta)$ describes the input-output relation of the model for a fixed $\theta \in \Theta$. The full parameter space $\Theta$ forms a Riemannian space which gives rise to a Riemannian metric, namely, the Fisher information matrix
\begin{align*} 
    F(\theta)
    = \E_{(x,y) \sim p} \Big[\frac{\partial}{\partial \theta} \log p(x,y; \theta) \frac{\partial}{\partial \theta}  \log p(x,y; \theta)^{\trans} \Big] \in \R^{d\times d} \, ,
\end{align*}
that can be approximated by the empirical Fisher information matrix 
\begin{align}\label{eq_emp_fisher}
\tilde{F}_k(\theta) = \frac{1}{k} \sum_{j=1}^k \frac{\partial}{\partial \theta}  \log p(x_j,y_j; \theta) \frac{\partial}{\partial \theta} \log p(x_j,y_j; \theta)^{\trans}\, ,
\end{align}
where $(x_j, y_j)_{j=1}^k$ are i.i.d.~drawn from the distribution $p(x,y; \theta)$ \cite{kunstner2019limitations}.\footnote{It is important that \smash{$(x_j, y_j)_{j=1}^k$} are drawn from the true distribution $p(x,y; \theta)$ in order for the empirical Fisher information to approximate the Fisher information, i.e., $\lim_{k\to \infty} \tilde F_k(\theta) = F(\theta)$~\cite{kunstner2019limitations}. This is ensured in our numerical analysis by design.} By definition, the Fisher information matrix is positive semidefinite and hence, its eigenvalues are non-negative, real numbers. 

The Fisher information conveniently helps capture the sensitivity of a neural network's output relative to movements in the parameter space, proving useful in natural gradient optimisation--a method that uses the Fisher information as a guide to optimally navigate through the parameter space such that a model's loss declines \cite{amari1998natural}. In \cite{fisherraonorm}, the authors leverage geometric invariances associated with the Fisher information, to produce the Fisher-Rao norm--a robust norm-based capacity measure, defined as the quadratic form $\norm{\theta}^2_{\mathrm{fr}} := \theta^{\,\trans} F(\theta) \theta$ for a vectorised parameter set, $\theta$. Notably, the Fisher-Rao norm acts as an umbrella for several other existing norm-based measures~\cite{neyshabur2015path, pmlr-v40-Neyshabur15, NIPS2017_7204} and has demonstrated desirable properties both theoretically, and empirically.

\subsection{The effective dimension}\label{sec_effdim} 
The effective dimension is an alternative complexity measure motivated by information geometry, with useful qualities. The goal of the effective dimension is to estimate the size that a model occupies in model space--the space of all possible functions for a particular model class, where the Fisher information matrix serves as the metric. Whilst there are many ways to define the effective dimension, a useful definition which we apply to both classical and quantum models is presented in~\cite{Figalli20}. The number of data observations determines a natural scale or resolution used to observe model space. This is beneficial for practical reasons where data is often limited, and can help in understanding how data availability influences the accurate capture of model complexity. 

\begin{definition}
The \emph{effective dimension} of a statistical model $\cM_\Theta := \{p(\cdot, \cdot; \theta) : \theta \in \Theta \}$ with respect to $\gamma \in (0,1]$, a $d$-dimensional parameter space $\Theta \subset \mathbb{R}^d$ and $n\in \N$, $n>1$ data samples is defined as
\begin{align}\label{eq_dim}
d_{\gamma,n}(\cM_{\Theta}):= 2 \frac{ \log\left( \frac{1}{V_{\Theta}} \int_{\Theta} \sqrt{\det\Big(\id_d + \frac{\gamma n}{2\pi \log n} \hat F(\theta) \Big) } \, \di \theta  \right)}{ \log\left(\frac{\gamma n}{2\pi \log n}\right)} \, ,
\end{align}
where $V_\Theta:=\int_{\Theta}\di \theta \in \R_+$ is the volume of the parameter space. $\hat{F}(\theta) \in \R^{d\times d}$ is the normalised Fisher information matrix defined as
\begin{align*}
    \hat F_{ij}(\theta):= d \frac{V_{\Theta}}{\int_{\Theta} \tr( F(\theta)) \di \theta} F_{ij}(\theta)\, ,
\end{align*}
where the normalisation ensures that $\frac{1}{V_{\Theta}}\int_{\Theta} \tr( \hat F(\theta)) \di \theta=d$. 
\end{definition}
The effective dimension neatly incorporates the Fisher information spectrum by integrating over its determinant. There are two minor differences between~\eqref{eq_dim} and the effective dimension from~\cite{Figalli20}: the presence of the constant $\gamma \in (0,1]$, and the $\log n$ term. These modifications are helpful in proving a generalisation bound, such that the effective dimension can be interpreted as a bounded capacity measure that serves as a useful tool to analyse the power of statistical models. We demonstrate this in the following section.

\subsection{Generalisation error bounds} \label{sec_generalization_bound}
Suppose we are given a hypothesis class, $\cH$, of functions mapping from $\cX$ to $\cY$ and a training set $\cS_n = \{ (x_1, y_1), \dots, (x_n, y_n) \} \in (\cX \times \cY)^n$, where the pairs $(x_i,y_i)$ are drawn i.i.d.~from some unknown joint distribution $p$. Furthermore, let $L:\cY \times \cY \to \R$ be a loss function. 
The challenge is to find a particular hypothesis $h \in \cH$ with the smallest possible \emph{expected risk}, defined as
$R(h) := \E_{(x,y) \sim p}[L(h(x),y)]$. Since we only have access to a training set $\cS_n$, a good strategy to find the best hypothesis $h \in \cH$ is to minimise the so called \emph{empirical risk}, defined as $R_n(h) := \frac1n \sum_{i=1}^n L(h(x_i),y_i)$. The difference between the expected and the empirical risk is the \emph{generalisation error}--an important quantity in machine learning that dictates whether a hypothesis $h \in \cH$ learned on a training set will perform well on unseen data, drawn from the unknown joint distribution $p$ \cite{neyshabur2017exploring}. Therefore, an upper bound on the quantity
\begin{align} \label{eq_toBound}
\sup_{h \in \cH} | R(h) - R_n(h) | \, ,
\end{align}
which vanishes as $n$ grows large, is of considerable interest. Capacity measures help quantify the expressiveness and power of $\cH$. Thus, the generalisation error in~\eqref{eq_toBound} is typically bounded by an expression that depends on a capacity measure, such as the VC dimension~\cite{dziugaite2017computing} or the Fisher-Rao norm~\cite{fisherraonorm}. Theorem~\ref{thm_generalisation_bound_cont} provides a novel bound based on the effective dimension, which we use to study the power of neural networks from hereon. 

\paragraph{Bounding generalisation error with the effective dimension}
In this manuscript, we consider neural networks as models described by stochastic maps, parameterised by some $\theta \in \Theta$.\footnote{As a result, the variables $h$ and $\cH$ are replaced by $\theta$ and $\Theta$, respectively.} The corresponding loss functions are mappings $L:\mathrm{P}(\cY) \times \mathrm{P}(\cY) \to \R$, where $\mathrm{P}(\cY)$ denotes the set of distributions on $\cY$. 
We assume the following  regularity assumption on the model $\cM_\Theta := \{p(\cdot,\cdot; \theta) : \theta \in \Theta \}$:
\begin{align}\label{it_i}
     \Theta \ni \theta \mapsto p(\cdot,\cdot;\theta) \quad \textnormal{is } M_1 \textnormal{-Lipschitz continuous w.r.t.~the supremum norm} \, .
\end{align}
\begin{theorem}[Generalisation bound for the effective dimension] \label{thm_generalisation_bound_cont}
Let $\Theta=[-1,1]^d$ and consider a statistical model $\cM_\Theta := \{p(\cdot,\cdot; \theta) : \theta \in \Theta \}$ satisfying~\eqref{it_i} such that the normalised Fisher information matrix $\hat F(\theta)$ has full rank for all $\theta \in \Theta$, and $\|\nabla_{\theta} \log \hat F(\theta)\|\leq \Lambda$ for some $\Lambda \geq 0$ and all $\theta \in \Theta$.
Let $d_{\gamma,n}$ denote the effective dimension of \smash{$\cM_\Theta$} as defined in~\eqref{eq_dim}.
Furthermore, let $L:\mathrm{P}(\cY) \times \mathrm{P}(\cY) \to [-B/2,B/2]$ for $B > 0$ be a loss function that is $\alpha$-H\"older continuous with constant $M_2$ in the first argument w.r.t.~the total variation distance for some $\alpha \in (0,1]$. Then there exists a constant $c_{d,\Lambda}$ such that for $\gamma \in (0,1]$ and all $n\in \N$, we have
\begin{align} \label{eq_generalisation_bound_continuous}
 \mathbb{P}\left(  \sup_{\theta \in \Theta} | R(\theta) -  R_n(\theta) | \geq 4M \sqrt{\frac{2\pi \log n}{\gamma n}} \right) 
\leq c_{d, \Lambda}\left(\frac{\gamma n^{1/\alpha}}{2\pi \log n^{1/\alpha}} \right)^{\!\!\!\frac{d_{\gamma,n^{1/\alpha}}}{2}} \!\!\!\! \exp\left(-\frac{16 M^2\pi \log n}{ B^2\gamma }\right) ,
\end{align}
where $M=M^\alpha_1 M_2$.
\end{theorem}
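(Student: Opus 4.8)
The plan is to follow the classical recipe for uniform deviation bounds — cover $\Theta$ by a finite net, concentrate at each net point, union-bound, and interpolate by continuity — but with the net built so that its cardinality is governed by the effective dimension rather than by $d$. Throughout, abbreviate $\kappa_m := \tfrac{\gamma m}{2\pi \log m}$, so that by~\eqref{eq_dim},
\[
\frac{1}{V_\Theta}\int_\Theta \sqrt{\det\!\big(\id_d + \kappa_m \hat F(\theta)\big)}\,\di\theta \;=\; \kappa_m^{\,d_{\gamma,m}/2}\,,
\]
and the right-hand side of~\eqref{eq_generalisation_bound_continuous} is exactly $c_{d,\Lambda}\,\kappa_{n^{1/\alpha}}^{\,d_{\gamma,n^{1/\alpha}}/2}$ multiplied by $\exp(-\tfrac{16 M^2 \pi \log n}{B^2 \gamma}) = n^{-16 M^2 \pi/(B^2\gamma)}$, the latter being what Hoeffding's inequality will deliver.

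\emph{The net.} Set $\varepsilon := (\tfrac{2\pi \log n}{\gamma n})^{1/(2\alpha)}$ and let $\theta_1,\dots,\theta_N$ be a minimal $\varepsilon$-net of $\Theta = [-1,1]^d$, so $N \le (1 + 1/\varepsilon)^d$. The key claim is $N \le c_{d,\Lambda}\,\kappa_{n^{1/\alpha}}^{\,d_{\gamma,n^{1/\alpha}}/2}$. By the displayed identity this amounts to bounding $\varepsilon^{-d}$ by a $(d,\Lambda)$-multiple of $\tfrac1{V_\Theta}\int_\Theta \sqrt{\det(\id_d + \kappa_{n^{1/\alpha}}\hat F(\theta))}\,\di\theta$. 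Two observations do it: (i) since $\alpha \le 1$, comparing logarithms gives $\varepsilon^{-d} \le C_{d,\alpha}\,\kappa_{n^{1/\alpha}}^{d/2}$; and (ii) $\sqrt{\det(\id_d + \kappa \hat F(\theta))} \ge \kappa^{d/2}\sqrt{\det \hat F(\theta)}$, while the full-rank hypothesis (positivity of $\det \hat F$ on the compact set $\Theta$) together with $\|\nabla_\theta \log \hat F\| \le \Lambda$ and the normalisation $\tfrac1{V_\Theta}\int_\Theta \tr \hat F = d$ force $\det \hat F(\theta) \ge c_{d,\Lambda} > 0$ uniformly on $\Theta$. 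Multiplying these through yields the claim. (A more intrinsic route, which is really why the effective dimension is the natural quantity here, is to replace the uniform net by one adapted to the anisotropic metric $\id_d + \kappa \hat F(\theta)$: a greedy/Vitali packing by the ellipsoids $\{(\theta - \theta_i)^\trans(\id_d + \kappa\hat F(\theta_i))(\theta - \theta_i) \le 1\}$ has cardinality comparable to their total volume $\int_\Theta \sqrt{\det(\id_d + \kappa \hat F(\theta))}\,\di\theta$, the $\log$-gradient bound being what lets the continuous volume control the discrete count up to a constant, and the additive $\id_d$ keeping the ellipsoids bounded in the near-degenerate directions.)

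\emph{Oscillation, concentration, assembly.} For $\theta, \theta' \in \Theta$, assumption~\eqref{it_i} bounds the total-variation distance between the model distributions by $M_1 \|\theta - \theta'\|$ (uniformly in the input), and $\alpha$-H\"older continuity of $L$ with constant $M_2$ then gives, for both the expected and the empirical risk, $|R(\theta) - R(\theta')| \le M\|\theta - \theta'\|^\alpha$ with $M = M_1^\alpha M_2$. Hence on each $\varepsilon$-ball the oscillation of $R - R_n$ is at most $2M\varepsilon^\alpha = 2M\sqrt{2\pi \log n/(\gamma n)}$ — this is the arithmetic forcing the ``sample size'' inside the effective dimension to be $n^{1/\alpha}$: an $\varepsilon$-perturbation of the parameter perturbs the loss only by $\varepsilon^\alpha$. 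On the other hand, for each fixed $\theta_i$ the random variable $R_n(\theta_i)$ is an average of $n$ i.i.d.\ terms in $[-B/2, B/2]$ with mean $R(\theta_i)$, so Hoeffding gives $\PP(|R(\theta_i) - R_n(\theta_i)| \ge s) \le 2\exp(-2ns^2/B^2)$; with $s := 2M\sqrt{2\pi\log n/(\gamma n)}$ the exponent is $16 M^2 \pi \log n/(\gamma B^2)$. Finally, the event $\{\sup_\theta |R(\theta) - R_n(\theta)| \ge 4M\sqrt{2\pi\log n/(\gamma n)}\}$ forces $|R(\theta_i) - R_n(\theta_i)| \ge s$ at the net point nearest the maximiser; a union bound over the $N \le c_{d,\Lambda}\kappa_{n^{1/\alpha}}^{d_{\gamma,n^{1/\alpha}}/2}$ net points (absorbing the factor $2$ and any $\gamma$- and $\log$-dependent constants into $c_{d,\Lambda}$) gives~\eqref{eq_generalisation_bound_continuous}.

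\emph{Main obstacle.} The crux is the net-cardinality bound: converting the honest covering number $\varepsilon^{-d}$ (or, in the intrinsic approach, a Fisher-adapted covering number) into the effective-dimension expression $c_{d,\Lambda}\kappa_{n^{1/\alpha}}^{d_{\gamma,n^{1/\alpha}}/2}$ with a constant controlled by $d$ and $\Lambda$ alone. This rests on the uniform determinant lower bound $\det \hat F(\theta) \ge c_{d,\Lambda} > 0$, which is exactly where all three structural hypotheses on $\hat F$ are used — full rank for positivity, the $\log$-gradient bound $\|\nabla_\theta \log \hat F\| \le \Lambda$ to make the bound quantitative (and, in the intrinsic approach, to make the continuous volume control the discrete packing), and the trace normalisation to fix the scale — and where one must also check that the remaining constants (in particular the $\alpha$-dependence in step (i)) genuinely collapse into $c_{d,\Lambda}$. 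A lesser point is tracking the continuity chain in the oscillation step so that the constant is precisely $M = M_1^\alpha M_2$.
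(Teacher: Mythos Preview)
Your overall architecture --- cover $\Theta$, concentrate at net points via Hoeffding, interpolate by H\"older continuity, union-bound --- matches the paper exactly, and your oscillation and concentration steps are correct (this is precisely the paper's Lemma~\ref{lem_Lipschitz}). The gap is in your primary route to the net-cardinality bound.

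Step~(ii) claims that full rank, the log-gradient bound, and the trace normalisation together force $\det\hat F(\theta)\ge c_{d,\Lambda}>0$. They do not. Take $\hat F(\theta)\equiv\diag\big(d-(d-1)\delta,\,\delta,\dots,\delta\big)$ for small $\delta>0$: this is constant in $\theta$ (so $\Lambda=0$), full rank, and has trace exactly $d$, yet $\det\hat F=(d-(d-1)\delta)\,\delta^{d-1}\to 0$. In this example $\sqrt{\det(\id_d+\kappa\hat F)}\sim\sqrt{\kappa d}$ whenever $\kappa\delta\ll1$, so $\kappa^{d_{\gamma,m}/2}\sim\sqrt{\kappa}$, while your Euclidean net has cardinality $\sim\kappa^{d/2}$; no constant depending only on $d$ and $\Lambda$ can bridge these. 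The hypotheses pin down the \emph{trace} of $\hat F$ and make all the $\hat F(\theta)$ mutually comparable, but they do not lower-bound the smallest eigenvalue.

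Your parenthetical ``intrinsic route'' is what actually works, and is essentially the paper's Lemma~\ref{lem_coveringED_cont}: diagonalise $\hat F(0)=\diag(s_1^2,\dots,s_d^2)$, cover $\Theta$ by boxes of side $\varepsilon$ in the Fisher metric, and observe that the count factorises as
\[
\prod_{i=1}^d \big\lceil \varepsilon^{-1}s_i\big\rceil \;\le\; c_d\prod_{i=1}^d\sqrt{1+\varepsilon^{-2}s_i^2}\;=\;c_d\sqrt{\det\!\big(\id_d+\varepsilon^{-2}\hat F(0)\big)}\,.
\]
The point is that the additive $\id_d$ already absorbs the near-degenerate directions (where $s_i\ll\varepsilon$, one box suffices and the factor $1+\varepsilon^{-2}s_i^2\approx1$ contributes nothing), so no determinant lower bound is needed. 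The log-gradient bound then lets you replace $\hat F(0)$ by $\hat F(\theta)$ at the cost of a factor $(1+c_d\Lambda)^d$, after which averaging over $\theta\in\Theta$ produces the effective-dimension integral directly. You should make this the main argument rather than a parenthetical aside.

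One smaller point: the paper handles the passage from $n$ to $n^{1/\alpha}$ not via your step~(i) (which, as you correctly flag, risks an $\alpha$-dependent constant) but by the elementary inequality $\big(\tfrac{2\pi\log n}{\gamma n}\big)^{1/(2\alpha)}\ge\big(\tfrac{2\pi\log n^{1/\alpha}}{\gamma n^{1/\alpha}}\big)^{1/2}$ for $\alpha\in(0,1]$, combined with the monotonicity of covering numbers in the radius. This keeps $c_{d,\Lambda}$ genuinely independent of $\alpha$.
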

The proof is given in Appendix~\ref{appendix_generalisation_proof}. Note that the choice of the norm to bound the gradient of the Fisher information matrix is irrelevant due to the presence of the dimensional constant $c_{d,\Lambda}$.\footnote{In the special case where the Fisher information matrix does not depend on $\theta$, we have $\Lambda=0$ and~\eqref{eq_generalisation_bound_continuous} holds for $c_{d,0}=2\sqrt{d}$. This may occur in scenarios where a neural network is already trained, i.e., the parameters $\theta \in \Theta$ are fixed.} If we choose $\gamma \in  (0,1]$ to be sufficiently small, we can ensure that the right-hand side of~\eqref{eq_generalisation_bound_continuous} vanishes in the limit $n\to \infty$.\footnote{More precisely, this occurs if $\gamma$ scales at most as $\gamma \sim 32 \pi \alpha M^2/(d B^2)$. To see this, we use the fact that $d_{\gamma,n} \leq  d + \tau/|\log n|$ for some constant $\tau>0$.} To verify the effective dimension's ability to capture generalisation behaviour, we conduct a numerical analysis similar to work presented in \cite{jia2019information}. We find that the effective dimension for a model trained on confusion sets with increasing label corruption, accurately captures generalisation behaviour. The details can be found in Appendix \ref{appendix_generalisation_analysis}. 

\begin{remark}[Properties of the effective dimension]\label{rmk_convergence}
 In the limit $n\to \infty$, the effective dimension converges to the maximal rank $\bar r:=\max_{\theta \in \Theta} r_\theta$, where $r_\theta \leq d$ denotes the rank of the Fisher information matrix $F(\theta)$. The proof of this result can be seen in Appendix~\ref{appendix_limit}, but it is worthwhile to note that the effective dimension does not necessarily increase monotonically with $n$, as explained in Appendix~\ref{appendix_edgeom}.\footnote{The geometric operational interpretation of the effective dimension only holds if $n$ is sufficiently large. We conduct experiments over a wide range of $n$ and ensure that conclusions are drawn from results where the choice of $n$ is sufficient.}
\end{remark}

The continuity assumptions of Theorem~\ref{thm_generalisation_bound_cont} are satisfied for a large class of classical and quantum statistical models~\cite{virmaux18,eisert20}, as well as many popular loss functions. The full rank assumption on the Fisher information matrix, however, often does not hold in classical models. Non-linear feedforward neural networks, which we consider in this study, have particularly degenerate Fisher information matrices~\cite{karakida2019universal}. Thus, we further extend the generalisation bound to account for a broad range of models that may not have a full rank Fisher information matrix.

\begin{remark}[Relaxing the rank constraint in Theorem~\ref{thm_generalisation_bound_cont}]\label{rmk_full_rank} 
The generalisation bound in~\eqref{eq_generalisation_bound_continuous} can be modified to hold for a statistical model without a full rank Fisher information matrix. By partitioning the parameter space $\Theta$, we discretise the statistical model and prove a generalisation bound for the discretised version of $\cM_\Theta:= \{p(\cdot,\cdot; \theta) : \theta \in \Theta \}$ denoted by \smash{$\cM^{(\kappa)}_\Theta := \{p^{(\kappa)}(\cdot,\cdot; \theta) : \theta \in \Theta \}$}, where $\kappa \in \N$ is a discretisation parameter. By choosing $\kappa$ carefully, we can control the discretisation error. This is explained in detail, along with the proof, in Appendix~\ref{app_discretization}.
\end{remark}

\subsection{The Fisher spectrum and the barren plateau phenomenon}\label{sec_fisher_bps}
The Fisher information spectrum for fully connected feedforward neural networks reveals that the parameter space is flat in most dimensions, and strongly distorted in a few others \cite{karakida2019universal}. These distortions are captured by a few very large eigenvalues, whilst the flatness corresponds to eigenvalues being close to zero. This behaviour has also been reported for the Hessian matrix, which coincides with the Fisher information matrix under certain conditions \cite{pennington2018spectrum, kunstner2019limitations, articlehessfish}.\footnote{For example, under the use of certain loss functions.} These types of spectra are known to slow down a model's training and may render optimisation suboptimal \cite{lecun2012efficient}. In the quantum realm, the negative effect of barren plateaus on training quantum neural networks has been linked to the Hessian matrix~\cite{cerezo2020hessians}. It was found that the entries of the Hessian vanish exponentially with the size of the system in models that are in a barren plateau. This implies that the loss landscape becomes increasingly flat as the size of the model increases, making optimisation more difficult.

The Fisher information can also be connected to barren plateaus. Assuming a log-likelihood loss function, without loss of generality, we can formulate the empirical risk over the full training set as
\[
R_n(\theta) = -\frac{1}{n} \log \Big( \prod_{i=1}^n p(y_i|x_i; \theta)\Big) =-\frac{1}{n} \sum_{i=1}^n \log p(y_i|x_i; \theta),\footnote{Minimising the empirical risk with a log-likelihood loss function coincides with the task of minimising the relative entropy $D(\cdot\|\cdot)$ between the distribution induced by applying the neural network to the observed input distribution $r$ and the observed output distribution $q$. Hence, equivalent to the log-likelihood loss function, we can choose $L(p(y|x;\theta) r(x),q(y))=D(p(y|x;\theta) r(x) \|q(y))$, which fits the framework presented in Section~\ref{sec_generalization_bound}. We further note that the relative entropy is $\alpha$-H\"older continuous in the first argument for $\alpha \in (0,1)$. In fact, the relative entropy is even log-Lipschitz continuous in the first argument, which can be utilised to strengthen the generalisation bound from Theorem~\ref{thm_generalisation_bound_cont} as explained in Remark~\ref{rmk_relEntropy}.}
\]
where $p(y_i| x_i; \theta)$ is the conditional distribution for a data pair $(x_i, y_i)$.\footnote{As is the case with the parity function chosen in the quantum neural network, and the softmax function chosen in the last layer of the classical neural network.} From Bayes rule, note that the derivative of the empirical risk function is then equal to the derivative of the log of the joint distribution summed over all data pairs, i.e.,
\begin{align*}
\frac{\partial}{\partial \theta} R_n(\theta) = -\frac{\partial}{\partial \theta} \frac{1}{n} \sum_{i=1}^n \log p(y_i|x_i; \theta) 
=- \frac{\partial}{\partial \theta} \frac{1}{n} \sum_{i=1}^n \log p(x_i, y_i; \theta) 
=- \frac{1}{n} \sum_{i=1}^n \frac{\partial}{\partial \theta} \log p(x_i, y_i; \theta)\, ,
\end{align*}
since the prior distribution $p(\cdot)$ does not depend on $\theta$. From \cite{mcclean2018barren}, we know that we are in a barren plateau if, for parameters $\theta$ uniformly sampled from $\Theta$, each element of the gradient of the loss function with respect to $\theta$ vanishes exponentially in the number of qubits, $S$. In mathematical terms this means 
\begin{align*} 
\left|\E_\theta \Big[ \frac{\partial}{\partial \theta_j} R_n(\theta)\Big] \right|
= \left|\E_\theta \Big[ \frac{1}{n} \sum_{i=1}^n \frac{\partial}{\partial \theta_j} \log p(x_i,y_i; \theta) \Big] \right|
=\left| \frac{1}{n} \sum_{i=1}^n \E_\theta \Big[ \frac{\partial}{\partial \theta_j} \log p(x_i,y_i; \theta) \Big] \right|
\leq \omega_S\, ,
\end{align*}
for all $j=1,\ldots d$ and for some nonnegative constant $\omega_S$ that goes to zero exponentially fast with increasing $S$. The barren plateau result also tells us that $\mathrm{Var}_\theta[ \frac{\partial}{\partial \theta_j} R_n(\theta)] \leq \omega_S$ for models in a barren plateau. 
By definition of the empirical Fisher information in~\eqref{eq_emp_fisher}, the entries of the Fisher matrix can be written as
\begin{align*}
F(\theta)_{jk} = \frac{\partial}{\partial \theta_j} R_n(\theta) \frac{\partial}{\partial \theta_k} R_n(\theta)\, ,
\end{align*}
for $j, k = 1, \ldots, d$. Hence we can write
\begin{align*}
    \E_\theta [F(\theta)_{jj}] 
    =\E_\theta \left[ \left(\frac{\partial}{\partial \theta_j} R_n(\theta) \right)^2  \right]
    = \mathrm{Var}_\theta \left[ \frac{\partial}{\partial \theta_j} R_n(\theta) \right] + \left(\E_{\theta} \left[ \frac{\partial}{\partial \theta_j} R_n(\theta) \right] \right)^2 
    \leq \omega_S+\omega_S^2 \, ,
\end{align*}
which implies $\tr(\E_\theta [F(\theta)])\leq d (\omega_S+\omega_S^2)$. Due to the positive semidefinite nature of the Fisher information matrix and by definition of the Hilbert-Schimdt norm, all matrix entries will approach zero if a model is in a barren plateau, and natural gradient optimisation techniques become unfeasible. We can conclude that a model suffering from a barren plateau will have a Fisher information spectrum with an increasing concentration of eigenvalues approaching zero as the number of qubits in the model increase. Conversely, a model with a Fisher information spectrum that is not concentrated around zero is unlikely to experience a barren plateau.

We investigate the spectra of quantum and classical neural networks in the following section and verify the trainability of these models with numerical experiments, including results from real quantum hardware.

\section{Numerical experiments and results}\label{sec_results}
In this section, we compare the Fisher information spectrum, effective dimension and training performance of the quantum neural network to feedforward models with different topologies. We also include the easy quantum model with a classically simulable feature map to understand the impact of data encoding on model expressibility and trainability. Trainability is further verified for the quantum neural network on the \texttt{ibmq\_montreal 27-qubit} device available through the IBM Quantum Experience via Qiskit~\cite{qiskit}. In order to do a systematic study, we deem two models comparable if they share the same number of trainable parameters $(d)$, input size $(s_{\mathrm{in}})$, and output size $(s_{\mathrm{out}})$, and consider $d \leq 100$, $s_{\mathrm{in}} \in \{4, 6, 8, 10\}$ with $s_{\mathrm{out}} = 2$.

\subsection{The Fisher information spectrum}
Strong connections to capacity and trainability can be derived from the spectrum of the Fisher information matrix. For each model with a specified triple $(d, s_{\mathrm{in}}, s_{\mathrm{out}})$, we sample $100$ sets of parameters uniformly on $\Theta=[-1,1]^d$ and compute the Fisher information matrix $100$ times using a standard Gaussian prior.\footnote{A sensitivity analysis is included in Appendix \ref{appendix_sensitivity} to verify that $100$ parameter samples are reasonable for the models we consider. In higher dimensions, this number will need to increase.} The resulting average distributions of the eigenvalues of these $100$ matrices are plotted in Figure \ref{fig_eigens} for $d = 40$, $s_{\mathrm{in}} = 4$ and $s_{\mathrm{out}} = 2$.

\begin{figure}[!htb]
\includegraphics[scale=0.35]{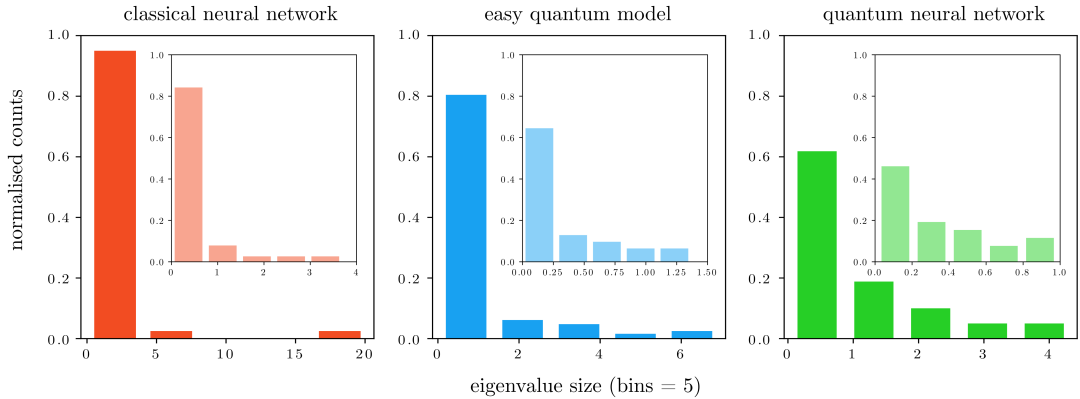}\caption{\textbf{Average Fisher information spectrum} plotted as a histogram for the classical feedforward neural network and the quantum neural network with two different feature maps. The plot labelled \text{easy quantum model} has a classically simulable data encoding strategy, whilst the \text{quantum neural network's} encoding scheme is conjectured to be difficult. In each model, we compute the Fisher information matrix $100$ times using parameters sampled uniformly at random. We fix the number of trainable parameters $d = 40$, input size $s_{\mathrm{in}} = 4$ and output size $s_{\mathrm{out}} = 2$. The distribution of eigenvalues is the most uniform in the quantum neural network, whereas the other models contain mostly small eigenvalues and larger condition numbers. This is made more evident by plotting the distribution of eigenvalues from the first bin in subplots within each histogram plot.}\label{fig_eigens}
\end{figure}

The classical model's Fisher information spectrum is concentrated around zero, where the majority of eigenvalues are negligible\footnote{Specifically, of the order $10^{-14}$, i.e., close to machine precision and thus, indistinguishable from zero.}, however, there are a few very large eigenvalues. This behaviour is observed across all classical network configurations that we consider.\footnote{The classical model depicted in Figure~\ref{fig_eigens} is the one with the highest average rank of Fisher information matrices from all possible classical configurations for a fixed number of trainable parameters, which subsequently gives rise to the highest effective dimension.} This is consistent with results from literature, where the Fisher information matrix of non-linear classical neural networks is known to be highly degenerate, with a few large eigenvalues \cite{karakida2019universal}. The concentration around zero becomes more evident in the subplot contained in each histogram depicting the eigenvalue distribution of the first bin. The easy quantum model also has most of its eigenvalues close to zero, and whilst there are some large eigenvalues, their magnitudes are not as extreme as the classical model. The quantum neural network, on the other hand, has a different Fisher information spectrum. The distribution of eigenvalues is more uniform, with no outlying values and remains more or less constant as the number of qubits increase (see Appendix~\ref{appendix_spectra}). This can be seen from the range of the eigenvalues on the x-axis in Figure~\ref{fig_eigens} and has implications for capacity and trainability which we examine next. 


\subsection{Capacity analysis}\label{sec_capacity}
The quantum neural network consistently achieves the highest effective dimension over all ranges of finite data we consider.\footnote{In the limit $n \to \infty$, all models will converge to an effective dimension equal to the maximum rank of the Fisher information matrix.} The reason is due to the speed of convergence, which is slowed down by smaller eigenvalues and an unevenly distributed Fisher information spectrum. Since the classical models contain highly degenerate Fisher information matrices, the effective dimension converges the slowest, followed by the easy quantum model. The quantum neural network, on the other hand, has a non-degenerate Fisher information matrix and the effective dimension converges to the maximum effective dimension, $d$.\footnote{See Remark~\ref{rmk_convergence}.} It also converges much faster due to its more evenly spread Fisher information spectrum. In Figure \ref{fig:effdim}, we plot the normalised effective dimension for all three models. The normalisation ensures that the effective dimension lies between $0$ and $1$ by simply dividing by $d$. 

\begin{figure}[!htb]
  \begin{subfigure}[b]{0.5\textwidth}
    \includegraphics[width=\textwidth]{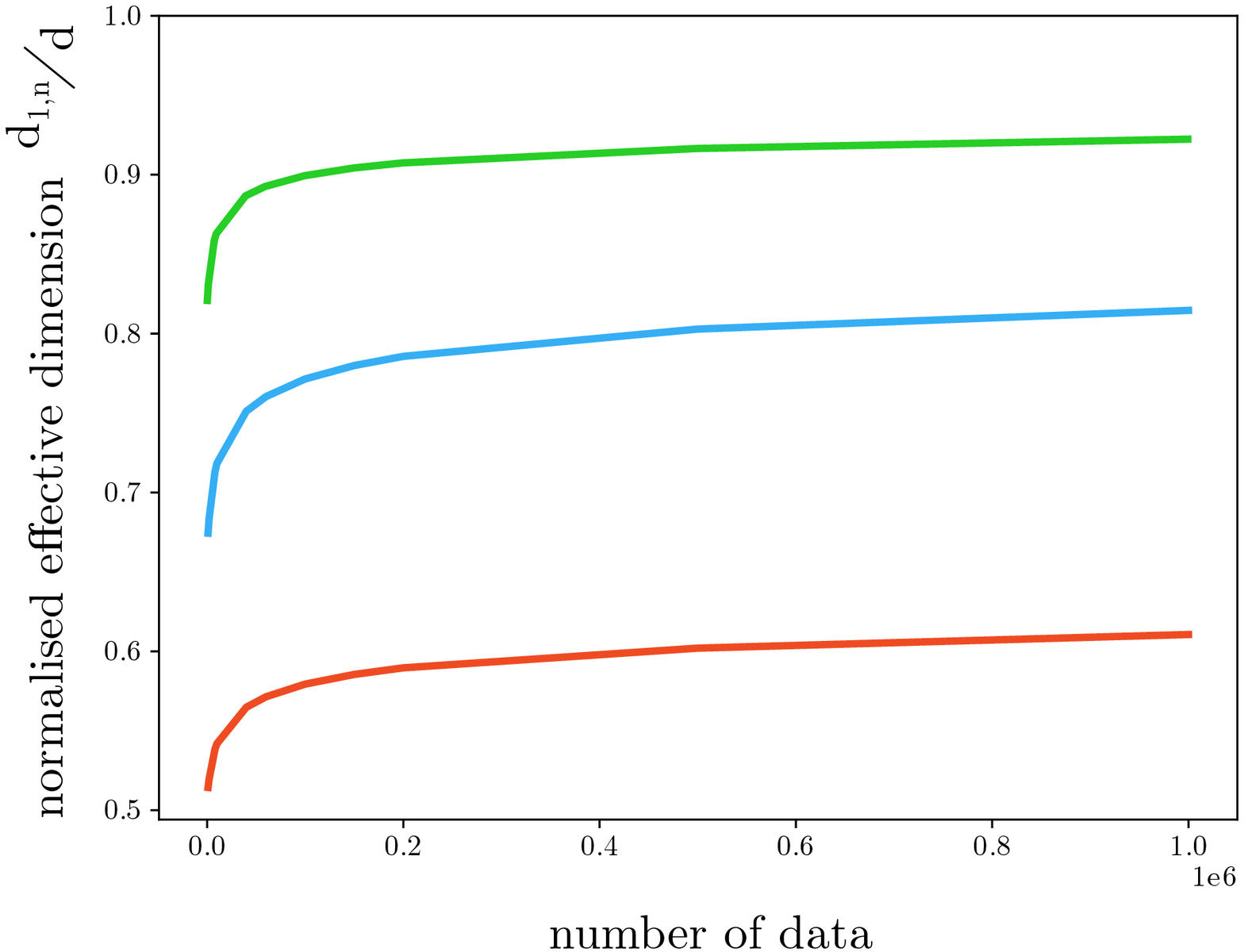}
    \caption{}
    \label{fig:effdim}
  \end{subfigure}
  \hfill
  \begin{subfigure}[b]{0.52\textwidth}
    \includegraphics[width=\textwidth]{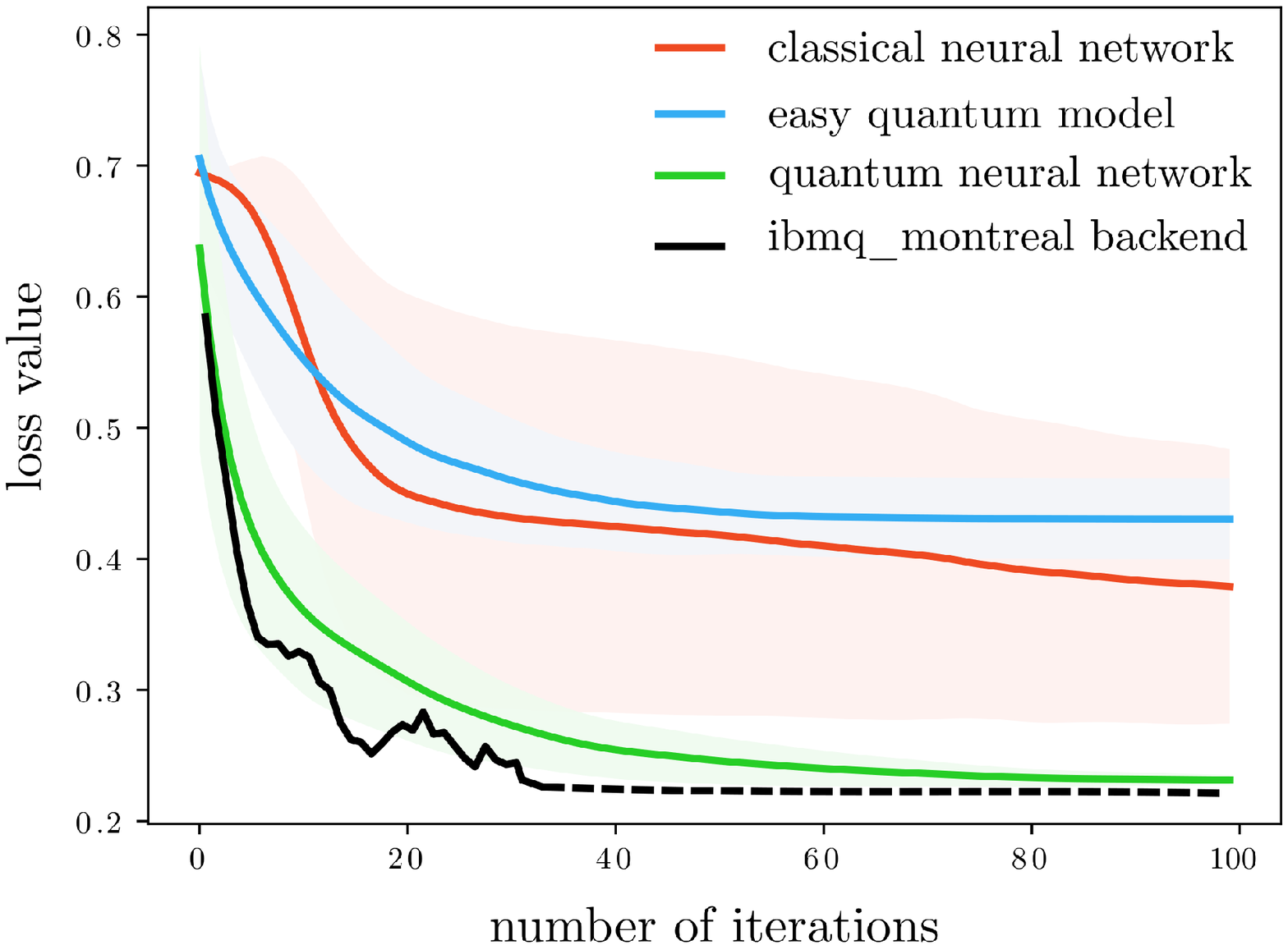}
    \caption{}
    \label{fig:loss}
  \end{subfigure}
  \caption{\textbf{(a) Normalised effective dimension} plotted for the quantum neural network in green, the easy quantum model in blue and the classical feedforward neural network in red. We fix the input size $s_{\mathrm{in}} = 4$, the output size $s_{\mathrm{out}} = 2$ and number of trainable parameters $d = 40$. Notably, the quantum neural network achieves the highest effective dimension over a wide range of data availability, followed by the easy quantum model. The classical model never achieves an effective dimension greater or equal to the quantum models for the range of finite data considered in this study. \textbf{(b) Training loss.} Using the first two classes of the \texttt{Iris} dataset \cite{Dua:2019}, we train all three models using $d = 8$ trainable parameters with full batch size. The \texttt{ADAM} optimiser with an initial learning rate of $0.1$ is selected. For a fixed number of training iterations $= 100$, we train all models over $100$ trials and plot the average training loss along with $\pm 1$ standard deviation. The quantum neural network maintains the lowest loss value on average across all three models, with the lowest spread over all training iterations. Whilst on average, the classical model trains to a lower loss than the easy quantum model, the spread is significantly larger. We further verify the performance of the quantum neural network on real quantum hardware and train the model using the \texttt{ibmq\_montreal 27-qubit} device, where the training advantage persists. We plot the hardware results till they stabilise, at roughly $33$ training iterations and find the performance to be even better than the simulated results.}
\end{figure}

The quantum neural network outperforms both models, followed by the easy quantum model and lastly, the classical model. Capacity calculations using the Fisher-Rao norm confirm these trends. The average Fisher-Rao norm over $100$ trials is roughly $250\%$ higher in the quantum neural network than in the classical neural network, after training the models on a simple dataset for a fixed number of iterations (see Appendix \ref{appendix_training} for details). 

\subsection{Trainability}\label{sec_training}
Upon examining the quantum neural network over an increasing system size (see Appendix \ref{appendix_spectra}), the eigenvalue distribution of the Fisher information matrix remains more or less constant, and a large amount of the eigenvalues are not near zero, thus, the model shows resilience against barren plateaus. This is not the case in the easy quantum model. The Fisher information spectrum becomes more ``barren plateau-like'', with the eigenvalues becoming smaller as the number of qubits increase. This highlights the importance of the feature map which can influence the likelihood of experiencing a barren plateau. The higher order feature map used in the quantum neural network seems to structurally change the optimisation landscape and remove the flatness, usually associated with barren plateaus or suboptimal optimisation conditions. Classically, the observed Fisher information spectrum is known to have undesirable optimisation properties where the outlying eigenvalues slow down training and loss convergence \cite{lecun2012efficient}. 

We confirm the training statements for all three models with an experiment illustrated in Figure~\ref{fig:loss}. Using a cross-entropy loss function, optimised with \texttt{ADAM} for a fixed number of training iterations $= 100$ and an initial learning rate $= 0.1$, the quantum neural network trains to a lower loss, faster than the other two models over an average of $100$ trials. To support the promising training performance of the quantum neural network, we also train it once on real hardware using the \texttt{ibmq\_montreal $27$-qubit} device. We reduce the number of $\mathrm{CNOT}$-gates by only considering linear entanglement instead of all-to-all entanglement in the feature map and variational circuit. This is to cope with hardware limitations. The full details of the experiment are contained in Appendix~\ref{appendix_hardware}. We find that the quantum neural network is capable of performing even better on real hardware, thus, tangibly demonstrating faster training. 

\section{Conclusion}\label{sec_conclusion}
In stark contrast to classical models, understanding the capacity of quantum neural networks is not well explored.
Moreover, classical neural networks are known to produce highly degenerate Fisher information matrices, which can significantly slow down training. For quantum neural networks, no such analysis has been done. 

In this study, the effective dimension is presented as a robust capacity measure for quantum and classical models, which we justify through proof of a novel generalisation bound. A particular quantum neural network offers advantages from both a capacity and trainability perspective. These advantages are captured by a high effective dimension and a non-degenerate Fisher information matrix. The feature map in the quantum neural network is conjectured to be hard to simulate classically, and replacing it with one that is easily simulable, impairs these advantages. This illustrates the importance of the choice of feature map in designing a powerful quantum neural network that is able to train well.

Regarding quantum model trainability, the Fisher information spectrum informs us of the likelihood of experiencing a barren plateau. Changing the feature map, influences the Fisher spectrum and hence, alters the likelihood of encountering a barren plateau. Again, this points to the significance of the feature map in a quantum neural network. A model with eigenvalues of the Fisher information matrix that do not vanish as the number of qubits grow, is unlikely to suffer from a barren plateau. The quantum neural network with a hard feature map is an example of such a model showing resilience to this phenomenon with good trainability, supported by results from real quantum hardware. 

This work opens many doors for further research. The feature map in a quantum model plays a large role in determining both its capacity and trainability via the effective dimension and Fisher information spectrum. A deeper investigation needs to be conducted on why the particular higher order feature map used in this study produces a desirable model landscape that induces both a high capacity, and faster training ability. Different variational circuits could also influence the model's landscape and the effects of non-unitary operations, induced through intermediate measurements for example, should be investigated. Additionally, the possibility of noise-induced barren plateaus needs examination. Finally, understanding generalisation performance on multiple datasets and larger models will prove insightful. 

Overall, we have shown that quantum neural networks can possess a desirable Fisher information spectrum that enables them to train faster and express more functions than comparable classical and quantum models---a promising reveal for quantum machine learning, which we hope leads to further studies on the power of quantum models.

\paragraph{Acknowledgements} We thank Maria Schuld for the insightful discussions on data embedding in quantum models. We also thank Travis L.~Scholten for constructive feedback on the manuscript and acknowledge support from the National Centre of Competence in Research \emph{Quantum Science and Technology} (QSIT).
IBM, the IBM logo, and ibm.com are trademarks of International Business Machines Corp., registered in many jurisdictions worldwide. Other product and service names might be trademarks of IBM or other companies. The current list of IBM trademarks is available at \url{https://www.ibm.com/legal/copytrade}.
\appendix
\section{Details of the quantum models}\label{appendix_models}
The quantum neural networks considered in this study are of the form given in Figure~\ref{fig_qnn_model_intro}. In the following, we explain the chosen feature maps and the variational form in more detail. 

\subsection{Specific feature maps}\label{appendix_fms}
Figure~\ref{fig_hard_fm} contains a circuit representation of the feature map developed in \cite{havlivcek2019supervised} and used in this study in the quantum neural network model. First, the feature map applies Hadamard gates on each of the $S := s_{\mathrm{in}}$ qubits, followed by a layer of $\mathrm{RZ}$-gates, whereby the angle of the Pauli rotation on qubit $i$ depends on the $i^{\text{th}}$ feature $x_i$ of the data vector $\vec{x}$, normalised between $[-1, 1]$.\footnote{This is to be consistent with the chosen parameter space for the classical models.} Then, $\mathrm{RZZ}$-gates are implemented on qubits $i, \: i+j$ for $i\in\left[1,\ldots, S-1\right]$ and $j\in\left[i+1,\ldots, S\right]$ using a decomposition into two $\mathrm{CNOT}$-gates and one $\mathrm{RZ}$-gate with a rotation angle $\left(\pi - x_i\right)\left(\pi - x_{i+j}\right)$. We consider only up to second order data encoding and the parameterised $\mathrm{RZ}$ and $\mathrm{RZZ}$-gates are repeated once. In other words, the feature map depth is equal to $2$ and the operations after the Hadamard gates in the circuit depicted in Figure~\ref{fig_hard_fm} are applied again. The classically simulable feature map employed in the easy quantum model, is simply the first sets of Hadamard and $\mathrm{RZ}$-gates, as done in Figure~\ref{fig_hard_fm} and is not repeated.

\begin{figure}[!htb]
\centering
\begin{tikzpicture}
\node[scale=0.7]{
\begin{tikzcd}[row sep=0.8cm]
\lstick{$\ket{0}$}  & \gate{\mathrm{H}} & \qw & \gate{\mathrm{RZ}(x_1)}  & \ctrl{1} & \qw & \ctrl{1} & \ctrl{2} & \qw & \ctrl{2} & \qw & \qw & \qw & \qw \ldots & \qw & \qw & \qw & \qw \\
\lstick{$\ket{0}$}  & \gate{\mathrm{H}} & \qw & \gate{\mathrm{RZ}(x_2)} & \targ{} & \gate{\mathrm{RZ}(x_1x_2)} & \targ{} & \qw & \qw & \qw & \ctrl{1} & \qw & \ctrl{1} & \qw \ldots & \qw & \qw & \qw & \qw \\
\lstick{\ket{0}} & \gate{\mathrm{H}} & \qw & \gate{\mathrm{RZ}(x_3)} & \qw & \qw  & \qw & \targ{} & \gate{\mathrm{RZ}(x_1x_3)} & \targ{} & \targ{} & \gate{\mathrm{RZ}(x_2x_3)} & \targ{} & \qw \ldots & \qw & \qw & \qw & \qw \\
\lstick{\vdots} \qw & \gate{\mathrm{H}}  & \qw & \vdots & & \qw & \qw & \qw & \qw & \qw & \qw & \qw & \qw & \qw \ldots & \ctrl{1} & \qw & \ctrl{1} & \qw \\
\lstick{$\ket{0}$} \qw & \gate{\mathrm{H}}  & \qw & \gate{\mathrm{RZ}(x_S)} & \qw & \qw & \qw & \qw & \qw & \qw & \qw & \qw & \qw & \qw \ldots & \targ{} & \gate{\mathrm{RZ}(x_{S-1}x_S)} & \targ{} & \qw \\
\end{tikzcd}
};
\end{tikzpicture}\caption{\textbf{Feature map} from \cite{havlivcek2019supervised}, used in the quantum neural network. First, Hadamard gates are applied to each qubit. Then, normalised feature values of the data are encoded using $\mathrm{RZ}$-gates. This is followed by $\mathrm{CNOT}$-gates and higher order data encoding between every pair of qubits, and every pair of features in the data. The feature map is repeated to create a depth of $2$. The easy quantum model, introduced in Section~\ref{sec_models}, applies only the first sets of Hadamard and $\mathrm{RZ}$-gates.}
\label{fig_hard_fm}
\end{figure}
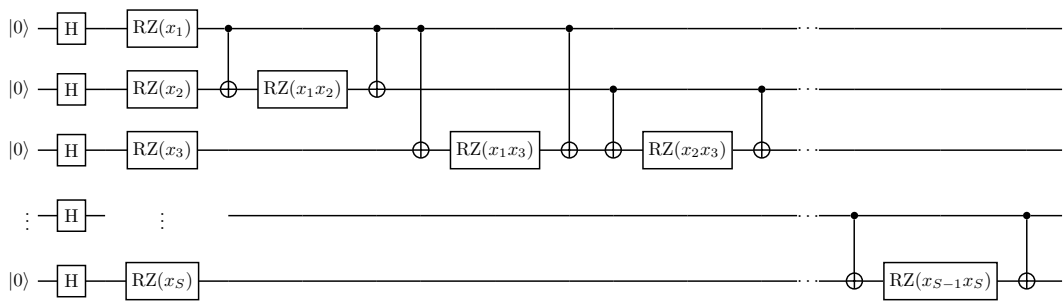

\subsection{The variational form}
Figure \ref{fig_var_circ} depicts the variational form, deployed in both the easy quantum model and the quantum neural network. The circuit consists of $S$ qubits, to which parameterised $\mathrm{RY}$-gates are applied. Thereafter, $\mathrm{CNOT}$-gates are applied between every pair of qubits in the circuit. Lastly, another set of parameterised $\mathrm{RY}$-gates are applied to each qubit. This circuit has, by definition, a depth of $1$ and $2S$ parameters. If the depth is increased, the entangling layers and second set of parameterised $\mathrm{RY}$-gates are repeated. The number of trainable parameters $d$ can be calculated as $d = (D+1)S$, where $S$ is equal to the input size of the data $s_{\mathrm{in}}$ due to the choice of both feature maps used in this study and $D$ is the depth of the circuit (i.e. how many times the entanglement and $\mathrm{RY}$ operations are repeated). 
\begin{figure}[!htb]
\centering
\begin{tikzpicture}
\node[scale=0.7]{
\begin{tikzcd}[row sep=0.8cm]
\lstick{$\ket{0}$}  & \qw & \gate{\mathrm{RY}(\theta_1)}  & \qw & \ctrl{1} & \ctrl{2} & \qw & \ctrl{3}  & \qw & \qw & \ldots & \qw & \qw & \gate{\mathrm{RY}(\theta_{S+1})} & \qw & \qw & \qw \\
\lstick{$\ket{0}$} & \qw & \gate{\mathrm{RY}(\theta_2)}  & \qw & \targ{} & \qw & \ctrl{1} & \qw & \ctrl{2} & \qw & \ldots & \qw & \qw & \gate{\mathrm{RY}(\theta_{S+2})} & \qw & \qw & \qw \\
\lstick{\ket{0}} & \qw & \gate{\mathrm{RY}(\theta_3)} & \qw & \qw & \targ{} & \targ{} & \qw  & \qw & \qw & \ldots & \qw &\qw & \gate{\mathrm{RY}(\theta_{S+3})} & \qw & \qw & \qw \\
\lstick{\vdots} \qw & \qw & \vdots & & \qw & \qw & \qw & \targ{} & \targ{} & \qw & \ldots & \ctrl{1} & \qw & \vdots & & \qw & \qw \\
\lstick{$\ket{0}$} \qw & \qw & \gate{\mathrm{RY}(\theta_S)} & \qw & \qw  & \qw & \qw & \qw & \qw & \qw & \ldots & \targ{} & \qw & \gate{\mathrm{RY}(\theta_{2S})} & \qw & \qw & \qw \\
\end{tikzcd}
};
\end{tikzpicture}\caption{\textbf{Variational circuit} used in both quantum models is plotted in this figure. The circuit contains parameterised $\mathrm{RY}$-gates, followed by $\mathrm{CNOT}$-gates and another set of parameterised $\mathrm{RY}$-gates.}
\label{fig_var_circ}
\end{figure}
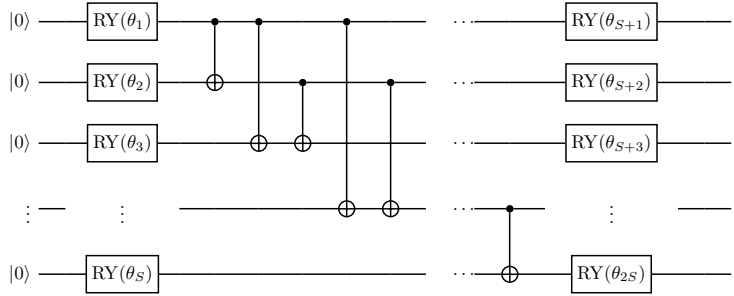


\section{Properties of the effective dimension}
\subsection{Proof of Theorem~\ref{thm_generalisation_bound_cont}} \label{appendix_generalisation_proof}
Given a positive definite matrix $A\geq 0$, and a function $g:\R^+\to \R^+,$ we define $g(A)$ as the matrix obtained by taking the image of the eigenvalues of $A$ under the map $g$. In other words, $A=U^\dagger{\rm diag}(\mu_1,\ldots,\mu_d)U$ implies $g(A)=U^\dagger{\rm diag}(g(\mu_1),\ldots, g(\mu_d))U$.
To prove the assertion of the theorem, we start with a lemma that relates the effective dimension to the covering number.
\begin{lemma} \label{lem_coveringED_cont}
Let $\Theta=[-1,1]^d$, and let $\mathcal N(\eps)$ denote the number of boxes of side length $\eps$ required to cover the parameter set $\Theta$, the length being measured with respect to the metric \smash{$\hat F_{ij}(\theta)$}.
Under the assumption of Theorem~\ref{thm_generalisation_bound_cont}, there exists a dimensional constant $c_d<\infty$ such that for $\gamma \in (0,1]$ and for all $n \in \N$, we have
\begin{align*}
\mathcal N\left(\sqrt{\frac{2\pi \log n}  {\gamma n}}\right) 
&\leq c_d  \left(\frac{\gamma n}{2\pi \log n} \right)^{d_{\gamma,n}/2} \, .
\end{align*}
\end{lemma}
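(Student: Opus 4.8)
The plan is to convert the covering number $\mathcal{N}(\eps)$, with $\eps := \sqrt{2\pi\log n/(\gamma n)}$, into an integral of $\sqrt{\det(\id_d+\beta_n\hat F(\theta))}$ over $\Theta$, where I abbreviate $\beta_n := \gamma n/(2\pi\log n) = \eps^{-2}$. By the definition~\eqref{eq_dim} of the effective dimension one has $\int_\Theta\sqrt{\det(\id_d+\beta_n\hat F(\theta))}\,\di\theta = V_\Theta\,\beta_n^{d_{\gamma,n}/2} = 2^d\beta_n^{d_{\gamma,n}/2}$, so an estimate of $\mathcal N(\eps)$ by this integral, up to a factor depending only on $d$, is exactly the claim.

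First I would localize the metric. Since the full-rank hypothesis makes each $\hat F(\theta)$ positive definite, the assumed bound $\|\nabla_\theta\log\hat F(\theta)\|\le\Lambda$ supplies, via a Gr\"onwall-type estimate, a scale $\rho=\rho(d,\Lambda)\in(0,1]$ such that $\tfrac12\hat F(\theta')\preceq\hat F(\theta)\preceq 2\hat F(\theta')$ whenever $\|\theta-\theta'\|_\infty\le\rho$. I then partition $\Theta=[-1,1]^d$ into $N_0=\lceil 2/\rho\rceil^d$ Euclidean subcubes $Q_1,\dots,Q_{N_0}$ of side at most $\rho$, with centres $\theta_\ell$; on each $Q_\ell$ the metric $\hat F$ is comparable to the constant metric $\hat F(\theta_\ell)$, so that, up to a factor $c_d$, covering $Q_\ell$ by $\hat F$-boxes of side $\eps$ costs no more than covering $Q_\ell$ by $\hat F(\theta_\ell)$-boxes of side $\eps$.

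Next I would count boxes on a single subcube. In an eigenbasis of $\hat F(\theta_\ell)$, with eigenvalues $\lambda_1(\theta_\ell),\dots,\lambda_d(\theta_\ell)$, an $\hat F(\theta_\ell)$-box of side $\eps$ is a Euclidean box of side $\eps/\sqrt{\lambda_i(\theta_\ell)}$ in direction $i$; since $Q_\ell$ has Euclidean side at most $\rho\le 1$, the number of such boxes needed is at most
\[
c_d\prod_{i=1}^d\max\!\Big(1,\tfrac{\sqrt{\lambda_i(\theta_\ell)}}{\eps}\Big)=c_d\prod_{i=1}^d\max\!\big(1,\sqrt{\beta_n\lambda_i(\theta_\ell)}\big)\le c_d\prod_{i=1}^d\sqrt{1+\beta_n\lambda_i(\theta_\ell)}=c_d\sqrt{\det\!\big(\id_d+\beta_n\hat F(\theta_\ell)\big)},
\]
using $\max(1,t)^2\le 1+t^2$. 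This is precisely where the identity matrix in the effective-dimension determinant is essential: it caps the contribution of the ``flat'' directions (small $\lambda_i$) at $1$, so the estimate does not blow up even though $\beta_n\to\infty$. Then, using $\hat F(\theta)\succeq\tfrac12\hat F(\theta_\ell)$ on $Q_\ell$ to replace $\sqrt{\det(\id_d+\beta_n\hat F(\theta_\ell))}$ by $c_d|Q_\ell|^{-1}\int_{Q_\ell}\sqrt{\det(\id_d+\beta_n\hat F(\theta))}\,\di\theta$ and summing over $\ell$,
\[
\mathcal N(\eps)\ \le\ \frac{c_d}{\rho^d}\int_\Theta\sqrt{\det\!\big(\id_d+\beta_n\hat F(\theta)\big)}\,\di\theta\ =\ \frac{c_d\,2^d}{\rho^d}\,\beta_n^{d_{\gamma,n}/2},
\]
which is the claim with the constant absorbing $\rho=\rho(d,\Lambda)$.

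I expect the main obstacle to be the localization step: turning $\|\nabla_\theta\log\hat F(\theta)\|\le\Lambda$ into a genuine two-sided operator comparison $\hat F(\theta)\asymp\hat F(\theta')$ for nearby $\theta,\theta'$ needs a careful matrix-logarithm estimate (the exponential is not operator monotone, so one cannot merely exponentiate an ordering), and one must also check that adding $\id_d$ preserves the comparison, i.e.\ that $\hat F(\theta)\preceq 2\hat F(\theta')$ implies $\id_d+\beta_n\hat F(\theta)\preceq 2(\id_d+\beta_n\hat F(\theta'))$, which holds. The remaining combinatorial point — justifying the per-cube count $c_d\prod_i\max(1,\cdot)$ when the metric box is not axis-aligned with $Q_\ell$ — is routine and only costs the dimensional constant $c_d$.
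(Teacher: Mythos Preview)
Your proposal is correct and reaches the same destination as the paper, but the route differs in one structural choice. The paper does not partition $\Theta$: it uses the Lipschitz hypothesis on $\log\hat F$ to obtain a \emph{global} two-sided comparison $\hat F(0)\preceq(1+c_d\Lambda)\hat F(\theta)$ for all $\theta\in\Theta$, diagonalises once at the single reference point $0$, bounds the covering number of all of $\Theta$ by $\hat c_d(1+c_d\Lambda)^{d/2}\prod_i\lceil\eps^{-1}s_i\rceil$, and then averages the resulting pointwise bound $\sqrt{\det(\id_d+\beta_n\hat F(\theta))}$ over $\theta$. You instead localise, partitioning $\Theta$ into subcubes of side $\rho(d,\Lambda)$ on which $\hat F$ is comparable to its value at the centre, count per subcube, and sum. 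Both arguments hinge on the same elementary inequality $\lceil t\rceil\le\sqrt{1+t^2}$ (your $\max(1,t)^2\le1+t^2$) to produce the $\id_d$ inside the determinant.

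What the two approaches buy: the paper's global comparison is shorter and avoids the partition bookkeeping, at the price of a comparison constant $(1+c_d\Lambda)^d$ that reflects the full diameter of $\Theta$. Your localisation needs only a fixed factor-$2$ comparison on each cell, pushing the $\Lambda$-dependence into the number of cells $\rho^{-d}$; this is exactly the template the paper later uses for the discretised model (where $\hat F^{(\kappa)}$ is piecewise constant and no Lipschitz comparison is available), so your argument is effectively a continuous interpolation between the two lemmas in the paper. You are also right that extracting an operator ordering $\hat F(\theta)\asymp\hat F(\theta')$ from $\|\nabla_\theta\log\hat F\|\le\Lambda$ is the only genuinely delicate step; the paper states the resulting multiplicative bound \eqref{eq:Lip log F} without further justification, so your caution there is well placed rather than a defect of your plan.
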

\begin{proof}
The result follows from the arguments presented in~\cite{Figalli20}.
More precisely, thanks to the bound $\|\nabla_{\theta} \log \hat F(\theta) \|\leq \Lambda$, which holds by assumption, it follows that
\begin{align}
\label{eq:Lip log F}
\|\hat F(\theta)-\hat F(0)\|\leq c_d\Lambda \|\hat F(0)\|\qquad \text{and}\qquad  \|\hat F(\theta)-\hat F(0)\|\leq c_d\Lambda \|\hat F(\theta)\|\qquad           \forall \, \theta \in \Theta \, .
\end{align}
In the following, we set $\varepsilon:=\sqrt{2\pi\log n/(\gamma n)}$.
Note that, if $\mathcal{B}_{\varepsilon}(\bar \theta_k)$ is a box centered at $\bar \theta_k \in \Theta$ and of length $\varepsilon$ (the length being measured with respect to the metric $\hat F_{ij}$),
then this box contains $(1+c_d\Lambda)^{-1/2}\mathcal{B}_{\varepsilon}(\bar \theta_k)$, where
$$
\mathcal{B}_{\varepsilon}(\bar \theta_k):=\{\theta \in \Theta\,:\,\langle \hat F(0)\cdot (\theta-\bar \theta_k), \theta-\bar \theta_k\rangle \leq \varepsilon^2\} \, .
$$
Up to a rotation, we can diagonalise the Fisher information matrix as $\hat F(0)=\diag(s_1^2,\ldots,s_d^2)$. Then, we see that $n$ the number of boxes of the form $(1+c_d\Lambda)^{-1/2}\mathcal{B}_{\varepsilon}(\bar \theta_k)$ needed to cover $\Theta$ is given by\footnote{Here $\hat c_d$ depends on the orientation of $[-1,1]^d$ with respect to the boxes $\mathcal{B}_{\varepsilon}(\bar \theta_k)$. In particular $\hat c_d\leq 2\sqrt{d}$ (the length of the diagonal of $[-1,1]^d$), and if the boxes $\mathcal{B}_{\varepsilon}(\bar \theta_k)$ are aligned along the canonical axes, then $\hat c_d=2$.}
\begin{align*}
\hat c_d(1+c_d\Lambda)^{d/2}\prod_{i=1}^d \left \lceil \varepsilon^{-1} s_i \right \rceil
&\leq \hat c_d(1+c_d\Lambda)^{d/2}\sqrt{ \prod_{i=1}^d\Big( 1+ \varepsilon^{-2} s_i^2\Big) }\\
&=\hat c_d(1+c_d\Lambda)^{d/2}\sqrt{\det\left(\id_d+\frac{\gamma n}{2\pi \log n} \hat F(0) \right)} \\
&\leq \hat c_d(1+c_d\Lambda)^{d/2}\sqrt{\det\left(\id_d+\frac{\gamma n}{2\pi \log n} (1+c_d\Lambda)\hat F(\theta) \right)}\\
&\leq \hat c_d(1+c_d\Lambda)^{d}\sqrt{\det\left(\id_d+\frac{\gamma n}{2\pi \log n} \hat F(\theta) \right)}\, ,
\end{align*}
where the second inequality follows from~\eqref{eq:Lip log F} and the fact that the determinant is operator monotone on the set of positive definite matrices, i.e., $0\leq A\leq B$ implies $\det(A) \leq \det(B)$~\cite[Exercise 12 in Section 82]{halmos_book}.


Since the number of boxes of size $\varepsilon$ (with respect to the metric $\hat F_{ij}$) needed to cover $\Theta$ is bounded by the number of boxes of the form $(1+c_d\Lambda)^{-1/2}\mathcal{B}_{\varepsilon}(\bar \theta_k)$, 
averaging the bound above with respect to $\theta \in \Theta$ we proved that
$$
\mathcal N(\varepsilon) \leq \hat c_d(1+c_d\Lambda)^d \frac{1}{V_{\Theta}}\int_{\Theta}\sqrt{\det\left(\id_d+\frac{\gamma n}{2\pi \log n} \hat F(\theta) \right)}\di \theta \, ,
$$
which implies the inequality in the statement of Lemma~\ref{lem_coveringED_cont} by recalling the definition of the effective dimension and $\varepsilon:=\sqrt{2\pi\log n/(\gamma n)}$.
\end{proof}

\begin{lemma} \label{lem_Lipschitz}
Let $\eps \in (0,1)$. Under the assumption of Theorem~\ref{thm_generalisation_bound_cont}, we have 
\begin{align*}
\mathbb{P}\bigg( \sup_{\theta \in \Theta} |R(\theta) - R_n(\theta)| \geq \eps \bigg) 
\leq 2\, \mathcal N\left(\Big(\frac{\eps}{4M} \Big)^{1/\alpha} \right) \exp\left(-\frac{n \eps^2}{2B^2}\right) \, ,
\end{align*}
where $\mathcal N(\eps)$ denotes the number of balls of side length $\eps$, with respect to $\hat F$, required to cover the parameter set $\Theta$.
\end{lemma}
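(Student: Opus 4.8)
The plan is to combine a standard covering-number / union-bound argument with a Hoeffding-type concentration inequality. First I would fix a minimal $\delta$-cover $\{\theta_1,\dots,\theta_{\mathcal N(\delta)}\}$ of $\Theta$ with respect to the metric induced by $\hat F$, where the radius $\delta$ is to be chosen at the end; the goal is to reduce the supremum over the continuum $\Theta$ to a maximum over finitely many centres. For a single fixed $\theta$, the quantity $R(\theta)-R_n(\theta)$ is an average of $n$ i.i.d.\ centred random variables $L(p(y_i|x_i;\theta)r(x_i),q(y_i)) - R(\theta)$, each bounded in an interval of length $B$ since $L$ takes values in $[-B/2,B/2]$. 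Hoeffding's inequality then gives $\mathbb P(|R(\theta)-R_n(\theta)|\geq \eps/2)\leq 2\exp(-n\eps^2/(2B^2))$, and a union bound over the $\mathcal N(\delta)$ centres yields
\begin{align*}
\mathbb P\Big(\max_{k} |R(\theta_k)-R_n(\theta_k)| \geq \tfrac{\eps}{2}\Big) \leq 2\,\mathcal N(\delta)\,\exp\!\Big(-\frac{n\eps^2}{2B^2}\Big)\, .
\end{align*}

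The second ingredient is a Lipschitz/Hölder continuity estimate that controls the fluctuation of $\theta\mapsto R(\theta)-R_n(\theta)$ within one cell of the cover, so that the event $\sup_\theta|R(\theta)-R_n(\theta)|\geq\eps$ is contained in the event $\max_k|R(\theta_k)-R_n(\theta_k)|\geq\eps/2$ once $\delta$ is small enough. Here I would chain together: (i) the regularity assumption~\eqref{it_i}, that $\theta\mapsto p(\cdot,\cdot;\theta)$ is $M_1$-Lipschitz w.r.t.\ the supremum norm, hence also controls the total variation distance between nearby conditional distributions; (ii) the $\alpha$-Hölder continuity of $L$ with constant $M_2$ in the first argument w.r.t.\ total variation; and (iii) the fact that a $\delta$-ball in the $\hat F$-metric is, up to the dimensional constants already appearing in Lemma~\ref{lem_coveringED_cont}, comparable to a Euclidean ball, so Euclidean-distance Lipschitz bounds translate into $\hat F$-metric ones. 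Composing these gives $|L(p(\cdot|\cdot;\theta)r,q)-L(p(\cdot|\cdot;\theta')r,q)|\leq M_2 (M_1\|\theta-\theta'\|)^{\alpha} = M\|\theta-\theta'\|^\alpha$ with $M=M_1^\alpha M_2$, uniformly in the data point, hence the same Hölder bound holds for both $R$ and $R_n$ and therefore for their difference with constant $2M$. Thus choosing the cover radius $\delta = (\eps/(4M))^{1/\alpha}$ ensures that within each cell the difference $R-R_n$ varies by at most $2M\delta^\alpha = \eps/2$, giving the reduction to the cover and completing the argument with $\mathcal N(\delta)=\mathcal N((\eps/(4M))^{1/\alpha})$.

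The main obstacle I anticipate is step (iii): making precise the passage between the $\hat F$-metric used to define $\mathcal N$ and the Euclidean (or supremum-norm) distance on $\Theta$ in which the Lipschitz constant $M_1$ naturally lives. Since $\hat F$ is only assumed to have full rank (not to be uniformly bounded below by a universal constant), one must use the bound $\|\nabla_\theta\log\hat F(\theta)\|\leq\Lambda$ together with the normalisation $\frac1{V_\Theta}\int_\Theta\tr(\hat F(\theta))\,\di\theta=d$ to argue that, after absorbing dimension- and $\Lambda$-dependent constants, a box of $\hat F$-side-length $\delta$ contains a Euclidean box of comparable size — exactly the type of estimate already extracted in the proof of Lemma~\ref{lem_coveringED_cont} via~\eqref{eq:Lip log F}. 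I would therefore reuse that comparison verbatim, so that the constants $4M$ (rather than $2M$) and the dimensional factors are exactly what is needed to match the statement. The remaining steps (Hoeffding, union bound, the factor $2$) are routine.
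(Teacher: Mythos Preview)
Your proposal is correct and follows the same three-step template as the paper: establish $\alpha$-H\"older continuity of $\theta\mapsto R(\theta)-R_n(\theta)$ with constant $2M$ by chaining assumption~\eqref{it_i} with the $\alpha$-H\"older property of $L$ (this is exactly the paper's~\eqref{eq_continuity}), cover $\Theta$ by balls of radius $(\eps/4M)^{1/\alpha}$ so that the supremum over each ball differs from the value at its centre by at most $\eps/2$, then apply Hoeffding at each centre and take a union bound.

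One remark: you are more scrupulous than the paper on your step~(iii). The paper's proof simply writes $\|\theta-\theta_i\|_\infty$ for points in the same covering ball and plugs this into the H\"older estimate, without ever discussing the passage between the $\hat F$-metric (in which $\mathcal N$ is defined in the lemma statement) and the supremum norm (in which~\eqref{it_i} is formulated). It does not appeal to Lemma~\ref{lem_coveringED_cont} or to the bound $\|\nabla_\theta\log\hat F\|\leq\Lambda$ at this stage. So the obstacle you anticipate is genuine, but the paper does not resolve it inside the proof of Lemma~\ref{lem_Lipschitz}; any constants arising from such a comparison are implicitly swallowed by the dimensional constant $c_{d,\Lambda}$ appearing only in the final statement of Theorem~\ref{thm_generalisation_bound_cont}.
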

\begin{proof}
The proof is a slight generalisation of a result found in~\cite[Chapter~3]{mohri2018foundations}.
Let $S(\theta) := R(\theta) - R_n(\theta)$. Then
\begin{align}
|S(\theta_1) - S(\theta_2) | 
\leq |R(\theta_1) - R(\theta_2) | + |R_n(\theta_1) - R_n(\theta_2) |
\leq 2M \| \theta_1 - \theta_2 \|^\alpha_\infty \, , \label{eq_continuity}
\end{align}
where the final step uses the fact that $R(\cdot)$ as well as $R_n(\cdot)$ are $\alpha$-H\"older continuous with constant $M$ for $M=M^\alpha_1 M_2$. To see this recall that by definition of the risk, we find for the observed input and output distributions $r \in \mathrm{P}(\cX)$ and $q\in \mathrm{P}(\cY)$, respectively,
\begin{align*}
 |R(\theta_1) - R(\theta_2) |
 &= \Big| \E_{r,q}\Big[L\big(p(y|x;\theta_1) r(x),q(y)\big)\Big] - \E_{r,q}\Big[L\big(p(y|x;\theta_2) r(x),q(y)\big)\Big]\Big| \\
 &\leq \E_{r,q}\Big[ \big | L\big(p(y|x;\theta_1) r(x),q(y)\big) - L\big(p(y|x;\theta_2) r(x),q(y)\big) \big | \Big] \\
 &\leq  M_2 \E_{r}\big[ \norm{p(y|x;\theta_1) r(x) - p(y|x;\theta_2) r(x)}^\alpha_{1} \big] \\
 & \leq M_2 \norm{p(y|x;\theta_1)  - p(y|x;\theta_2)}^\alpha_{\infty} \E_{r} \big [\norm{r(x)}_{1}^\alpha \big ] \\ 
 & = M_2 \norm{p(y|x;\theta_1)  - p(y|x;\theta_2)}^\alpha_{\infty} \\
 & \leq M_2  M^\alpha_1 \norm{\theta_1 - \theta_2}^\alpha_{\infty} \, ,
\end{align*}
where the third step uses the continuity assumption of the loss function and the fourth step follows from H\"older's inequality.
The final step uses the Lipschitz continuity assumption of the model. Equivalently we see that  
\begin{align*}
    |R_n(\theta_1) - R_n(\theta_2) |\leq M_2  M^\alpha_1 \norm{\theta_1 - \theta_2}^\alpha_{\infty}\, .
\end{align*}
Assume that $\Theta$ can be covered by $k$ subsets $B_1, \ldots , B_k$, i.e.~$\Theta = B_1 \cup \ldots \cup B_k$. Then, for any $\eps > 0$,
\begin{equation}
\mathbb{P}\left( \sup_{\theta \in \Theta} | S(\theta) | \geq \eps \right) 
= \mathbb{P}\left( \bigcup_{i=1}^k \sup_{\theta \in B_i} | S(\theta) | \geq \eps \right) 
\leq \sum_{i=1}^k \mathbb{P}\left( \sup_{\theta \in B_i} | S(\theta) | \geq \eps \right)\, ,
\label{eq:bound_sup_L}
\end{equation}
where the inequality is due to the union bound.
Finally, let $k = \mathcal N ( (\frac{\eps}{4M})^{1/\alpha})$ and let $B_1, \ldots , B_k$ be balls of radius $(\frac{\eps}{4M})^{1/\alpha}$ centered at $\theta_1, \ldots , \theta_k$ covering $\Theta$. Then the following inequality holds for all $i=1,\ldots,k$,
\begin{align} \label{eq_step_middle}
\mathbb{P}\left( \sup_{\theta \in B_i} | S(\theta) | \geq \eps \right) 
\leq \mathbb{P}\left( | S(\theta_i) | \geq \frac{\eps}{2} \right) \, .
\end{align}
To prove~\eqref{eq_step_middle}, observe that by using~\eqref{eq_continuity} we have for any $\theta \in B_i$, 
\begin{equation*}
|S(\theta) - S(\theta_i) | \le 2 M \| \theta - \theta_i \|^\alpha_\infty \leq \frac{\eps}{2}\, .
\end{equation*}
The last inequality implies that, if $| S(\theta) | \geq \eps$, it must be that $| S(\theta_i) | \geq \frac{\eps}{2}$. This in turns implies~\eqref{eq_step_middle}.

To conclude, we apply Hoeffding's inequality, which yields
\begin{align}
\mathbb{P}\left( |S(\theta_i) | \geq \frac{\eps}{2} \right)
=\mathbb{P}\left( |R(\theta_i) - R_n(\theta_i)| \geq \frac{\eps}{2} \right) 
\leq 2 \exp \left( \frac{-n \eps^2}{2B^2} \right) \, . \label{eq_hoeffding}
\end{align}
Combined with~\eqref{eq:bound_sup_L}, we obtain
\begin{align*}
\mathbb{P}\left( \sup_{\theta \in \Theta} | S(\theta) | \geq \eps \right) 
&\leq \sum_{i=1}^k \mathbb{P}\left( \sup_{\theta \in B_i} | S(\theta) | \geq \eps \right) \\
&\leq \sum_{i=1}^k \mathbb{P}\left( | S(\theta_i) | \geq \frac{\eps}{2} \right) \\
&\leq 2 \cN \! \left( \Big(\frac{\eps}{4M}\Big)^{1/\alpha} \right) \exp \left( \frac{-n \eps^2}{2B^2} \right) \, ,
\end{align*}
where the second step uses~\eqref{eq_step_middle}. The final step follows from~\eqref{eq_hoeffding} and by recalling that $k = \cN((\frac{\eps}{4M})^{1/\alpha})$.   
\end{proof}

Having Lemma~\ref{lem_coveringED_cont} and Lemma~\ref{lem_Lipschitz} at hand we are ready to prove the assertion of Theorem~\ref{thm_generalisation_bound_cont}.
Lemma~\ref{lem_Lipschitz} implies for $\eps= 4M \sqrt{2\pi \log n/(\gamma n)}$
\begin{align} 
&\mathbb{P}\bigg( \sup_{\theta \in \Theta} |R(\theta) - R_n(\theta)| \geq 4M \sqrt{2\pi \log n/(\gamma n)} \bigg) \nonumber \\
&\hspace{45mm}\leq 2\, \mathcal{N}\left( \Big( \frac{2\pi \log n}{\gamma n} \Big)^{\frac{1}{2\alpha}}\right)  \exp\left(-\frac{16 M^2\pi \log n}{ B^2\gamma }\right)\nonumber \\
&\hspace{45mm}\leq 2\, \mathcal{N}\left( \Big( \frac{2\pi \log n^{1/\alpha}}{\gamma n^{1/\alpha}} \Big)^{\frac{1}{2}}\right)  \exp\left(-\frac{16 M^2\pi \log n}{ B^2\gamma }\right)\nonumber  \\
&\hspace{45mm}\leq 4 c_d  \left(\frac{\gamma n^{1/\alpha}}{2\pi \log n^{1/\alpha}} \right)^{\!\!\frac{d_{\gamma,n^{1/\alpha}}}{2}}  \exp\left(-\frac{16 M^2\pi \log n}{ B^2\gamma }\right) \, , \label{eq_final_step_pf_cont}
\end{align}
where the penultimate step uses
\begin{align*}
\Big( \frac{2\pi \log n}{\gamma n} \Big)^{\frac{1}{2\alpha}}
\geq \Big( \frac{2\pi \log n^{1/\alpha}}{\gamma n^{1/\alpha}} \Big)^{\frac{1}{2}} \, ,
\end{align*}
for all $\lambda \in (0,1]$ and $\alpha \in (0,1]$.
The final step in~\eqref{eq_final_step_pf_cont} uses Lemma~\ref{lem_coveringED_cont}. \qed

\begin{remark}[Improved scaling for relative entropy loss function] \label{rmk_relEntropy}
The relative entropy is commonly used as a loss function. Note that the relative entropy is log-Lipschitz in the first argument which is better than H\"older continuous.\footnote{Recall that the function $f(t)=t\log(t)$ is log-Lipschitz with constant $1$, i.e., $|f(t)-f(s)|\leq |t-s| \log(|t-s|)$ for $|t-s|\leq 1/\ee$. } As a result we can improve the bound from Lemma~\ref{lem_Lipschitz} to
\begin{align*}
\mathbb{P}\bigg( \sup_{\theta \in \Theta} |R(\theta) - R_n(\theta)| \geq \eps \bigg) 
\leq 2\, \mathcal N\left(\frac{\eps/(4M)}{|\log(\eps/4)|} \right) \exp\left(-\frac{n \eps^2}{2B^2}\right) \, ,
\end{align*}
by following the proof given above and utilising the log-Lipschitz property of the relative entropy in its first argument and the fact that the inverse of $t|\log(t)|$ behaves like $s/|\log(s)|$ near the origin.\footnote{More precisely we can choose $k=\mathcal N(\frac{\eps/(4M)}{|\log(\eps/4)|} )$ in the proof above.}  
\end{remark}

\begin{remark}[Boundedness assumption of loss function]
By utilizing a stronger concentration bound than Hoeffding's inequality in~\eqref{eq_hoeffding}, one may be able to relax the assumption that the loss function in Theorem~\ref{thm_generalisation_bound_cont} has to be bounded.
\end{remark}

\subsection{Generalisation ability of the effective dimension}\label{appendix_generalisation_analysis}
In order to assess the effective dimension's ability to capture generalisation behaviour, we conduct a numerical experiment similar to work in \cite{jia2019information}. Using a feedforward neural network with a single hidden layer, an input size of $s_{\mathrm{in}} = 6$, output size $s_{\mathrm{out}} = 2$ and number of trainable weights $d = 880$, we train the network on confusion sets constructed from scikit-learn's \texttt{make blobs} dataset \cite{scikit-learn}. More concretely, we use $1000$ data points and train the network to zero training loss. This is repeated several times, each time with the data labels becoming increasingly randomised, thereby creating multiple confusion sets. The network's size is chosen such that it is able to achieve zero training error for all confusion sets considered. 

We then calculate the effective dimension of the network, using the parameter set produced after training on each confusion set. If a proposed capacity measure accurately captures generalisation ability, we would expect to see an increasing capacity as the percentage of randomised labels in the confusion set increases, until roughly $50\%$ of the labels are randomised. A network requires more expressive power to fit random labels (i.e.~to fit noise), and this is exactly captured by the effective dimension and plotted in Figure~\ref{fig_generalisation}.

\begin{figure}[!htb]
\centering
\includegraphics[scale=0.43]{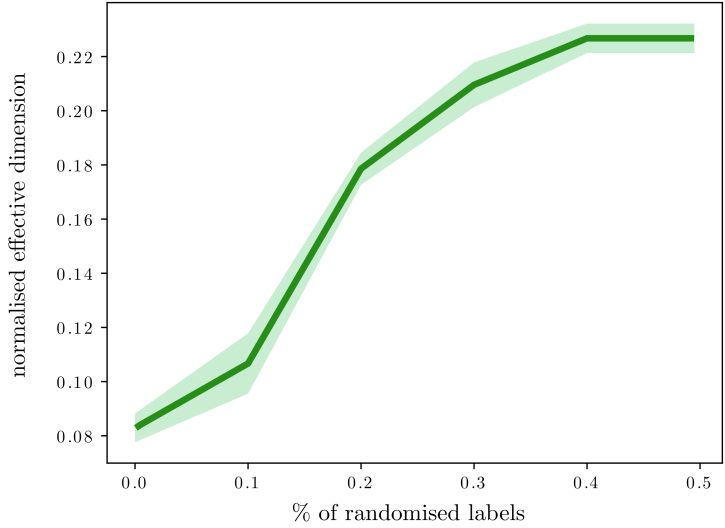}\caption{\textbf{Generalisation behaviour} of the effective dimension. We plot the normalised effective dimension for a network trained on confusion sets with increasing randomisation, averaged over $10$ different training runs, with one standard deviation above and below the mean. The effective dimension correctly increases as the data becomes ``more random'' and is thus, able to accurately capture a model's generalisation behaviour.}
\label{fig_generalisation}
\end{figure}

\subsection{Effective dimension converges to maximal rank of Fisher information matrix} \label{appendix_limit}
The effective dimension converges to the maximal rank of the Fisher information matrix denoted by $\bar r:=\max_{\theta \in \Theta} r_\theta$ in the limit $n \to \infty$. Since the Fisher information matrix is positive semidefinite, it can be unitarily diagonalised. By definition of the effective dimension, we see that, without loss of generality, $F(\theta)$ can be diagonal, i.e.~$F(\theta)=\diag(\lambda_1(\theta),\ldots,\lambda_{r_\theta}(\theta),0\,\ldots,0)$. Furthermore we define the normalisation constant 
\begin{align*}
\beta
:=d\frac{ V_{\Theta}}{\int_{\Theta} \tr\,F(\theta) \di \theta} \, ,
\end{align*}
such that $\hat F(\theta)=\beta F(\theta)$. Let $\kappa_n:=\frac{\gamma n}{2\pi\log n}$ and consider $n$ to be sufficiently large such that $\kappa_n \geq 1$. By definition of the effective dimension we find
\begin{align*}
    d_{\gamma, n}
    &= 2  \log \left( \frac{1}{V_{\Theta}} \int_{\Theta} \sqrt{\det(\id_d + \kappa_n \hat F(\theta))} \di \theta \right)/ \log\left(\kappa_n \right) \\
    &= 2 \log \left( \frac{1}{V_{\Theta}} \int_{\Theta} \sqrt{\big(1+\kappa_n \beta \lambda_1(\theta)\big) \ldots \big(1+\kappa_n \beta \lambda_{r_\theta}(\theta)\big)} \di \theta \right) / \log\left(\kappa_n \right)\\
    &\leq 2 \log \left( \frac{1}{V_{\Theta}} \int_{\Theta} \sqrt{\kappa_n^{r_{\theta}}\big(1+\beta \lambda_1(\theta)\big) \ldots \big(1+\beta \lambda_{r_\theta}(\theta)\big)} \di \theta \right) / \log\left(\kappa_n \right) \\
    &\leq 2 \log \left( \frac{\kappa_n^{\bar r/2}}{V_{\Theta}} \int_{\Theta} \sqrt{\big(1+\beta \lambda_1(\theta)\big) \ldots \big(1+\beta \lambda_{\bar r}(\theta)\big)} \di \theta \right) / \log\left(\kappa_n \right)\, ,
\end{align*}
where the final step uses that the Fisher information matrix is positive definite. Taking the limit $n \to \infty$ gives
\begin{align*}
    \lim_{n\to \infty}   d_{\gamma, n}
        \leq \bar r +  \lim_{n\to \infty}2 \log \left( \frac{1}{V_{\Theta}} \int_{\Theta} \sqrt{\big(1+\beta \lambda_1(\theta)\big) \ldots \big(1+\beta \lambda_{\bar r}(\theta)\big)} \di \theta \right) / \log\left(\kappa_n \right)
        =\bar r \, .
\end{align*}
To see the other direction, let $\cA:=\{\theta \in \Theta : r_{\theta} = \bar r\}$ and denote its volume by $|\cA|$. By definition of the effective dimension we obtain 
\begin{align*}
    \lim_{n\to \infty}   d_{\gamma, n} 
    &\geq 2  \log \left( \frac{1}{V_{\Theta}} \int_{\cA} \sqrt{\det(\id_d + \kappa_n \hat F(\theta))} \di \theta \right)/ \log\left(\kappa_n \right) \\
    &= \lim_{n\to \infty} 2 \log(|\cA|/V_{\Theta}) / \log\left(\kappa_n \right) +  \lim_{n\to \infty} 2  \log \left( \frac{1}{|\cA|} \int_{\cA} \sqrt{\det(\id_d + \kappa_n \hat F(\theta))} \di \theta \right)/ \log\left(\kappa_n \right)\\
    &\geq \lim_{n\to \infty} 2  \log \left( \frac{1}{|\cA|} \int_{\cA} \sqrt{\det(\kappa_n \hat F(\theta))} \di \theta \right)/ \log\left(\kappa_n \right) \\
    &= \bar r + \lim_{n\to \infty} 2  \log \left( \frac{1}{|\cA|} \int_{\cA} \sqrt{\det(\hat F(\theta))} \di \theta \right)/ \log\left(\kappa_n \right) \\
    &= \bar r \, .
\end{align*}
This proves the other direction and concludes the proof.
\qed

\subsection{A geometric depiction of the effective dimension}\label{appendix_edgeom}
The effective dimension defined in~\eqref{eq_dim} does not necessarily increase monotonically with the number of data, $n$. Recall that the effective dimension attempts to capture the size of a model, whilst $n$ determines the resolution at which the model can be observed. Figure~\ref{fig_edgeometry} contains an intuitive example of a case where the effective dimension is not monotone in $n$. We can interpret a model as a geometric object. When $n$ is small, the resolution at which we are able to see this object is very low. In this unclear, low resolution setting, the model can appear to be a $2$-dimensional disk as depicted in Figure~\ref{fig_edgeometry}. Increasing $n$, increases the resolution and the model can then look $1$-dimensional, as seen by the spiralling line in the medium resolution regime. Going to very high resolution, and thus, very high $n$, reveals that the model is a $2$-dimensional structure. In this example, the effective dimension will be high for small $n$, where the model is considered $2$-dimensional, lower for slightly higher $n$ where the model seems $1$-dimensional, and high again as the number of data becomes sufficient to accurately quantify the true model size. Similar examples can be constructed in higher dimensions by taking the same object and allowing it to spiral inside the unit ball of the ambient space $\R^d$.
Then, the effective dimension will be $d$ for small $n$, it will go down to a value close to $1$, and finally converge to $2$ as $n \to \infty$. In all experiments conducted in this study, we examine the effective dimension over a wide range of $n$, to ensure it is sufficient in accurately estimating the size of a model. 

\begin{figure}[!htb]
\centering
\includegraphics[scale=0.14]{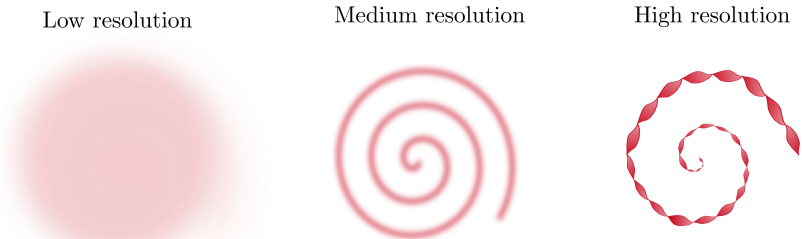}\caption{\textbf{Geometric picture of a model at different resolution scales.} In the low resolution scale, the model can appear as a 2-dimensional disk and the effective dimension attempts to
quantify the size of this disk. As we enhance the resolution by increasing the number of data
used in the effective dimension, the medium scale reveals a 1-dimensional line, spiralling. Adding
sufficient data and moving to high resolution allows the effective dimension to accurately capture
the model's true size, which in this case is actually a 2-dimensional object. Thus, the effective dimension does not necessarily increase monotonically with the number of data used.}\label{fig_edgeometry}
\end{figure}

\subsection{Removing the rank constraint via discretisation} \label{app_discretization}

The aim of this section is to find a suitable generalisation of the results in Section~\ref{appendix_generalisation_proof} when the Fisher information matrix does not satisfy the bound $||\nabla_\theta \log \hat{F}|| \leq \Lambda$. Indeed, this is a rather strong bound as it forces $\hat{F}$ to have constant rank, so it is desirable to find a variant of Lemmas~\ref{lem_coveringED_cont} and~\ref{lem_Lipschitz} that do not require such an assumption. 

Our approach to this general problem is based on the idea that, in practical applications, the Fisher matrix is evaluated at finitely many points, so it makes sense to approximate a statistical model with a discretised one where the corresponding Fisher information matrix is piecewise constant. 

Let $\Theta=[-1,1]^d$ and consider a statistical model $\cM_\Theta := \{p(\cdot,\cdot; \theta) : \theta \in \Theta \}$ with a Fisher information matrix denoted by $F(\theta)$ for $\theta \in \Theta$.
Given an integer $\kappa>1$, we consider a discretised version of the statistical model. More precisely, we split $\Theta$ into \smash{$\kappa^d$} disjoint cubes \smash{$\{G_i\}_{i=1}^{\kappa^d}$} of size $2/\kappa$. Then, given one of these small cubes $G_i$, we consider its center $x_i$ and we split $G_i$ into $2^{d}$ disjoint simplices, where each simplex is generated by $x_i$ and one of the faces of $\partial G_i$.
We denote the set of all these simplices by $\{\Theta_\ell\}_{\ell=1}^{m}$, where $m=2^d\kappa^d.$ Note that $\{\Theta_\ell\}_{\ell=1}^{m}$ is a regular triangulation of $\Theta$. 

Now, let  \smash{$\cM^{(\kappa)}_\Theta := \{p^{(\kappa)}(\cdot,\cdot; \theta) : \theta \in \Theta \}$} be a discretised version of $\cM_\Theta$ such that $p^{(\kappa)}$ is affine on each simplex $\Theta_\ell$.\footnote{For this, it suffices to define $p^{(\kappa)}(\cdot,\cdot;\theta)=p(\cdot,\cdot;\theta)$ whenever $\theta$ coincides with one of the vertices of $\Theta_\ell$ for some $\ell$, and then one extends $p^{(\kappa)}$ inside each simplex $\Theta_\ell$ as an affine function.} Note that, with this definition, the Fisher information matrix of the discretised model $F^{(\kappa)}(\theta)$ is constant inside each simplex $\Theta_\ell$. We note that, by construction, $\theta \mapsto p^{(\kappa)}(\cdot,\cdot;\theta)$ is still $M_1$-Lipschitz continuous.\footnote{Indeed, recall that we defined $p^{(\kappa)}=p$ on the vertices of the simplices and then we  extended $p^{(\kappa)}$ as an affine function inside each simplex. With this construction, the Lipschitz constant of $p^{(\kappa)}$ is bounded by the Lipschitz constant of $p$ (since the affine extension does not increase the Lipschitz constant).} 
The risk function with respect to the discretised model is denoted by $R^{(\kappa)}$.
\begin{theorem}[Generalisation bound for effective dimension without rank constraint] \label{thm_generalisation_bound}
Let $\Theta=[-1,1]^d$ and consider a statistical model $\cM_\Theta := \{p(\cdot,\cdot; \theta) : \theta \in \Theta \}$ satisfying~\eqref{it_i}. For $\kappa \in \N$, let \smash{$\cM^{(\kappa)}_\Theta := \{p^{(\kappa)}(\cdot,\cdot; \theta) : \theta \in \Theta \}$} be the discretised form as described above. Let \smash{$d^{(\kappa)}_{\gamma,n}$} denote the effective dimension of \smash{$\cM^{(\kappa)}_\Theta$} as defined in~\eqref{eq_dim}. Furthermore, let $L:\mathrm{P}(\cY) \times \mathrm{P}(\cY) \to [-B/2,B/2]$ for $B > 0$ be a loss function that is $\alpha$-H\"older continuous with constant $M_2$ in the first argument w.r.t.~the total variation distance for some $\alpha \in (0,1]$. Then, there exists a dimensional constant $c_{d}$ such that for $\gamma \in (0,1]$ and for all $n\in \N$, we have
\begin{align} \label{eq_generalisation_bound}
 \mathbb{P}\left(  \sup_{\theta \in \Theta} | R^{(\kappa)}(\theta) -  R^{(\kappa)}_n(\theta) | \geq 4M \sqrt{\frac{2\pi \log n}{\gamma n}} \right) 
\!\leq\! c_{d}\left(\frac{\gamma n^{1/\alpha}}{2\pi \log n^{1/\alpha}} \right)^{\!\!\!\frac{d^{(\kappa)}_{\gamma,n^{1/\alpha}}}{2}} \!\!\!\!\!\! \exp\!\left(\!-\frac{16 M^2\pi \log n}{ B^2\gamma }\right) ,
\end{align}
where $M=M^\alpha_1 M_2$.
\end{theorem}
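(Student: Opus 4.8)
The plan is to essentially reduce Theorem~\ref{thm_generalisation_bound} to Theorem~\ref{thm_generalisation_bound_cont} applied to the discretised model $\cM^{(\kappa)}_\Theta$. The key observation is that, although the original Fisher matrix $F(\theta)$ may be rank-deficient and certainly need not satisfy the log-gradient bound $\|\nabla_\theta \log \hat F(\theta)\| \leq \Lambda$, the discretised Fisher matrix $F^{(\kappa)}(\theta)$ is \emph{piecewise constant} on the triangulation $\{\Theta_\ell\}_{\ell=1}^m$. Hence the hypotheses of Theorem~\ref{thm_generalisation_bound_cont} are satisfied on each simplex $\Theta_\ell$ in the trivial way ($\Lambda=0$ there, as in the footnote after Theorem~\ref{thm_generalisation_bound_cont}), and the $M_1$-Lipschitz continuity of $\theta\mapsto p^{(\kappa)}(\cdot,\cdot;\theta)$ is preserved by construction, so the Hölder-continuity bound~\eqref{eq_continuity} for $R^{(\kappa)}$ and $R^{(\kappa)}_n$ goes through verbatim with the same constant $M=M_1^\alpha M_2$.

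\textbf{Key steps.} First I would redo the covering-number estimate of Lemma~\ref{lem_coveringED_cont} for the discretised model. On each simplex $\Theta_\ell$ the matrix $\hat F^{(\kappa)}$ is constant, so after diagonalising $\hat F^{(\kappa)}|_{\Theta_\ell} = \diag(s_{\ell,1}^2,\ldots,s_{\ell,d}^2)$ the number of $\varepsilon$-boxes (with respect to that metric) needed to cover $\Theta_\ell$ is at most $\hat c_d \prod_{i=1}^d \lceil \varepsilon^{-1}s_{\ell,i}\rceil \leq \hat c_d \sqrt{\det(\id_d + \frac{\gamma n}{2\pi\log n}\hat F^{(\kappa)}|_{\Theta_\ell})}\,\cdot\,(\text{vol factor})$; summing over the $m=2^d\kappa^d$ simplices and recognising the average as the integral defining $d^{(\kappa)}_{\gamma,n}$ yields
$\mathcal N^{(\kappa)}(\varepsilon) \leq c_d\, \frac{1}{V_\Theta}\int_\Theta \sqrt{\det(\id_d + \frac{\gamma n}{2\pi\log n}\hat F^{(\kappa)}(\theta))}\,\di\theta = c_d \big(\frac{\gamma n}{2\pi\log n}\big)^{d^{(\kappa)}_{\gamma,n}/2}$, exactly the conclusion of Lemma~\ref{lem_coveringED_cont} but with $c_d$ in place of the $\Lambda$-dependent constant (the piecewise-constant structure removes the need for~\eqref{eq:Lip log F}). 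Second, Lemma~\ref{lem_Lipschitz} applies unchanged to $R^{(\kappa)}, R^{(\kappa)}_n$ since its only inputs are the Hölder bound~\eqref{eq_continuity}, the union bound, and Hoeffding's inequality, none of which sees the rank of the Fisher matrix. Third, I would combine the two exactly as in the proof of Theorem~\ref{thm_generalisation_bound_cont}: set $\eps = 4M\sqrt{2\pi\log n/(\gamma n)}$ in Lemma~\ref{lem_Lipschitz}, use $\big(\frac{2\pi\log n}{\gamma n}\big)^{1/(2\alpha)} \geq \big(\frac{2\pi\log n^{1/\alpha}}{\gamma n^{1/\alpha}}\big)^{1/2}$ for $\alpha\in(0,1]$, and then plug in the discretised covering bound at scale $n^{1/\alpha}$ to land on~\eqref{eq_generalisation_bound}.

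\textbf{Main obstacle.} The routine parts — re-deriving the two lemmas — are genuinely routine once one notices that piecewise-constancy replaces the Lipschitz hypothesis. The one point deserving care, and the reason the triangulation is built the way it is, is the covering count near the interfaces between simplices: a box of side $\varepsilon$ placed across the boundary between two adjacent simplices sees two different metrics $\hat F^{(\kappa)}|_{\Theta_\ell}$ and $\hat F^{(\kappa)}|_{\Theta_{\ell'}}$, so the naive per-simplex counts must be glued without double-counting blowing up the constant. Since there are $m=2^d\kappa^d$ simplices, a careless argument would introduce a factor depending on $\kappa$, which is unacceptable because $\kappa$ is a free discretisation parameter; the fix is to count boxes simplex-by-simplex with the ceiling $\lceil\varepsilon^{-1}s_{\ell,i}\rceil$ absorbing the boundary overlap (each simplex contributes at most one extra layer of boxes per coordinate direction, hence only a bounded multiplicative loss $\hat c_d \leq 2\sqrt d$ per simplex), and then observe that the \emph{sum} of these per-simplex determinant terms is precisely $\frac{1}{V_\Theta}\int_\Theta\sqrt{\det(\cdots)}\,\di\theta$, so $\kappa$ disappears into the definition of $d^{(\kappa)}_{\gamma,n}$ itself rather than surviving as a prefactor. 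Controlling the discretisation error $\sup_\theta|R(\theta)-R^{(\kappa)}(\theta)|$ in terms of $\kappa$ (so that one can later trade off $\kappa$) is the remaining quantitative point, but that follows from the $M_1$-Lipschitz bound on $p-p^{(\kappa)}$, which is $O(M_1/\kappa)$ on each simplex, combined with the $\alpha$-Hölder continuity of $L$.
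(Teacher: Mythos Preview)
Your overall strategy is exactly the paper's: prove a discretised analogue of Lemma~\ref{lem_coveringED_cont} using that $\hat F^{(\kappa)}$ is constant on each simplex $\Theta_\ell$, observe that Lemma~\ref{lem_Lipschitz} goes through verbatim for the discretised model (it never uses the Fisher matrix), and then combine the two exactly as in the proof of Theorem~\ref{thm_generalisation_bound_cont}. Your identification of the Lipschitz preservation for $p^{(\kappa)}$ and the irrelevance of rank to Lemma~\ref{lem_Lipschitz} is also correct.

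There is, however, a concrete gap in your per-simplex covering count, and it is precisely the $\kappa$-dependence you flag as the main obstacle. You write that the number of $\varepsilon$-boxes needed to cover $\Theta_\ell$ is at most $\hat c_d\prod_i\lceil \varepsilon^{-1}s_{\ell,i}\rceil$. That bound ignores that $\Theta_\ell$ has Euclidean diameter $\leq 2\kappa^{-1}$; the correct per-simplex count is $\hat c_d\prod_i\lceil 2\kappa^{-1}\varepsilon^{-1}s_{\ell,i}\rceil$. If you use your looser count and bound it by $\hat c_d\sqrt{\det(\id_d+\varepsilon^{-2}\hat F^{(\kappa)}|_{\Theta_\ell})}$, then summing over the $m=2^d\kappa^d$ simplices gives
\[
\sum_{\ell=1}^m \sqrt{\det(\cdots)|_{\Theta_\ell}}
=\sum_{\ell=1}^m \frac{1}{\mathrm{vol}(\Theta_\ell)}\int_{\Theta_\ell}\sqrt{\det(\cdots)}\,\di\theta
=\kappa^d\int_\Theta\sqrt{\det(\cdots)}\,\di\theta,
\]
since $\mathrm{vol}(\Theta_\ell)=\kappa^{-d}$. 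So $\kappa$ does \emph{not} disappear into $d^{(\kappa)}_{\gamma,n}$ as you claim; you pick up an inadmissible $\kappa^d$ prefactor. The paper's fix is exactly to keep the diameter factor in the ceiling, so that the per-simplex bound already carries a $\kappa^{-d}$ which cancels against the $\kappa^d$ coming from the number of simplices, leaving only a dimensional constant. Your ``(vol factor)'' placeholder and the ``one extra layer of boxes per direction'' heuristic do not supply this cancellation. Once you insert the $2\kappa^{-1}$ diameter into the count, your argument becomes the paper's Lemma~\ref{lem_coveringED} and the rest of your outline goes through unchanged. (Your final paragraph on controlling $\sup_\theta|R(\theta)-R^{(\kappa)}(\theta)|$ is not needed for the theorem as stated, which concerns only $R^{(\kappa)}$; it belongs to the subsequent discussion of how to choose $\kappa$.)
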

To prove the statement of the theorem we need a preparatory lemma that is the discretised version of Lemma~\ref{lem_coveringED_cont}.
\begin{lemma} \label{lem_coveringED}
Let $\Theta=[-1,1]^d$, and let $\mathcal N^{(\kappa)}(\eps)$ denote the number of boxes of side length $\eps$ required to cover the parameter set $\Theta$, the length being measured with respect to the metric \smash{$\hat F^{(\kappa)}_{ij}(\theta)$}.
Under the assumption of Theorem~\ref{thm_generalisation_bound}, there exists a dimensional constant $c_d<\infty$ such that for $\gamma \in (0,1]$ and for all $n \in \N$, we have
\begin{align*}
\mathcal N^{(\kappa)}\left(\sqrt{\frac{2\pi \log n}  {\gamma n}}\right) 
&\leq c_d  \left(\frac{\gamma n}{2\pi \log n} \right)^{d^{(\kappa)}_{\gamma,n}/2} \, .
\end{align*}
\end{lemma}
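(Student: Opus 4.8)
The plan is to adapt the proof of Lemma~\ref{lem_coveringED_cont} to the discretised model, exploiting the fact that on the triangulation $\{\Theta_\ell\}_{\ell=1}^{m}$ (with $m=2^d\kappa^d$) the normalised Fisher information matrix $\hat F^{(\kappa)}(\theta)$ is \emph{exactly} constant on each simplex, say $\hat F^{(\kappa)}(\theta)=\hat F^{(\kappa)}_\ell$ for $\theta\in\Theta_\ell$ (this is unaffected by the global scalar normalisation). This is precisely the $\Lambda=0$ situation of Lemma~\ref{lem_coveringED_cont}, so the hypothesis $\|\nabla_\theta\log\hat F\|\leq\Lambda$ is no longer needed; instead, I would cover $\Theta$ by covering each simplex separately and summing.

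First I would fix a simplex $\Theta_\ell$ and count boxes of $\hat F^{(\kappa)}_\ell$-side length $\varepsilon:=\sqrt{2\pi\log n/(\gamma n)}$ needed to cover it. Diagonalising $\hat F^{(\kappa)}_\ell=\diag(s_1^2,\ldots,s_d^2)$ up to a rotation, such a box is a rectangular box with side $\varepsilon/s_j$ in direction $j$ (unconstrained where $s_j=0$). Since $\Theta_\ell$ sits inside a cube of side $2/\kappa$, the number of such boxes is at most $\prod_{j=1}^d\lceil \tfrac{2\sqrt d}{\kappa}\,\varepsilon^{-1}s_j\rceil$, and — using the $\lceil a\rceil\leq\sqrt{1+a^2}$-type estimates and operator monotonicity of the determinant exactly as in the computation in the proof of Lemma~\ref{lem_coveringED_cont} — this is bounded by $c_d\sqrt{\det(\id_d+\kappa_n\hat F^{(\kappa)}_\ell)}$ for a dimensional constant $c_d$, where $\kappa_n:=\gamma n/(2\pi\log n)=\varepsilon^{-2}$.

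Then I would sum over the $m$ simplices. Using that they have equal volume $V_\Theta/m$ and that $\hat F^{(\kappa)}$ is constant on each,
\begin{align*}
\mathcal N^{(\kappa)}(\varepsilon)\ \leq\ \sum_{\ell=1}^m\mathcal N_\ell(\varepsilon)\ \leq\ c_d\sum_{\ell=1}^m\sqrt{\det(\id_d+\kappa_n\hat F^{(\kappa)}_\ell)}\ =\ c_d\,\frac{m}{V_\Theta}\int_\Theta\sqrt{\det\bigl(\id_d+\kappa_n\hat F^{(\kappa)}(\theta)\bigr)}\,\di\theta ,
\end{align*}
and the right-most expression equals $c_d\,m\,\kappa_n^{d^{(\kappa)}_{\gamma,n}/2}$ by the definition~\eqref{eq_dim} of the effective dimension of $\cM^{(\kappa)}_\Theta$. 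Substituting $\varepsilon=\sqrt{2\pi\log n/(\gamma n)}$ then gives the stated bound, up to the factor $m$.

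The main obstacle is controlling this combinatorial factor $m=2^d\kappa^d$: naively each of the $\kappa^d$ simplices contributes at least one box, so one must genuinely use that every simplex has diameter $O(\sqrt d/\kappa)$, which makes the per-simplex box count in the ``large-eigenvalue'' directions scale like $s_j/(\kappa\varepsilon)$ rather than $s_j/\varepsilon$ and thereby helps offset the number of simplices; for fixed $\kappa$ this at worst yields a $\kappa$-dependent constant, which in the intended application (where $\kappa$ is tuned against the discretisation error) is negligible compared with the other factors. A secondary technical point is the treatment of covering boxes that straddle a simplex boundary, where $\hat F^{(\kappa)}$ jumps: such boxes are refined into sub-boxes, which changes the count only by lower-order boundary terms absorbed into $c_d$. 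Apart from this bookkeeping, the argument is verbatim that of Lemma~\ref{lem_coveringED_cont} with $\Lambda=0$.
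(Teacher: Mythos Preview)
Your overall strategy matches the paper's exactly: exploit that $\hat F^{(\kappa)}$ is constant on each simplex $\Theta_\ell$, cover each simplex separately using the diagonalised metric, and sum. The gap is precisely where you flagged it---the leftover factor $m=2^d\kappa^d$. The lemma asserts a \emph{dimensional} constant $c_d$, so a $\kappa$-dependent constant does not prove the statement, and your remark that this ``is negligible compared with the other factors'' is not a proof (indeed, $\kappa$ is later taken large, cf.\ the remark after Theorem~\ref{thm_generalisation_bound}).

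The paper absorbs $m$ exactly, and the missing manoeuvre is one you almost describe but do not carry out: write the per-simplex bound as an \emph{integral over $\Theta_\ell$} rather than as a bare determinant. Since the Euclidean diameter of $\Theta_\ell$ is at most $2/\kappa$, the per-simplex count is at most $\hat c_d\prod_i\lceil(2/\kappa)\varepsilon^{-1}s_i\rceil$; the paper bounds this product by $\hat c_d\,2^d\kappa^{-d}\sqrt{\prod_i(1+\varepsilon^{-2}s_i^2)}$. Because $|\Theta_\ell|=\kappa^{-d}$ and the integrand is constant on $\Theta_\ell$, this equals $\hat c_d\,2^d\int_{\Theta_\ell}\sqrt{\det(\id_d+\varepsilon^{-2}\hat F^{(\kappa)}(\theta))}\,\di\theta$. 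Summing over $\ell$ then turns $\sum_\ell\int_{\Theta_\ell}$ into $\int_\Theta$ with \emph{no} residual factor of $m$; the final constant is $\hat c_d\,4^d$, purely dimensional. In your write-up you discarded the $1/\kappa$ in the ceiling before passing to $\sqrt{1+\varepsilon^{-2}s_i^2}$, which is exactly what produces the spurious $m$. (Your boundary-box worry does not arise: the sum of per-simplex covering numbers already upper-bounds the global one, so no refinement across simplex faces is needed.)
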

\begin{proof}
Recall that we work in the discretised model $\cM^{(\kappa)}_{\Theta}$, so our metric $\hat F^{(\kappa)}_{ij}(\theta)$ is constant on each element $\Theta_{\ell}$ of the partition. So, we fix $\ell$, and we count first the number of boxes of side length $\eps$ required to cover $\Theta_\ell$.

Up to a rotation, we can diagonalise the Fisher information matrix $\hat F^{(\kappa)}|_{\Theta_\ell}$ as $\diag(s_1^2,\ldots,s_d^2)$.
Note that $\Theta_\ell$ has Euclidean diameter bounded by $2\kappa^{-1}$ and volume $\kappa^{-d}$.
Also, if $\mathcal{B}_{\varepsilon}(\bar \theta_\ell)$ is a ball centered at $\bar \theta_{\ell} \in \Theta_{\ell}$ and of length $\varepsilon$, then 
$$
\mathcal{B}_{\varepsilon}(\bar \theta_\ell)\cap \Theta_\ell:=\Big\{\theta \in \Theta_\ell\,:\,\sum_{i=1}^ds_i^2[(\theta-\bar\theta_\ell)\cdot e_i]^2 \leq \varepsilon^2\Big\} \, .
$$
Then, the number of balls of size $\eps$ needed to cover $\Theta_{\ell}$ is bounded by
\begin{align*}
\hat c_d\prod_{i=1}^d \left \lceil 2\kappa^{-1}\varepsilon^{-1} s_i \right \rceil
&\leq \hat c_d2^d\kappa^{-d} \prod_{i=1}^d \left \lceil \varepsilon^{-1} s_i \right \rceil \\
&\leq \hat c_d2^d \kappa^{-d} \sqrt{ \prod_{i=1}^d\Big( 1+ \varepsilon^{-2} s_i^2\Big) }\\
&=\hat c_d2^d \int_{\Theta_\ell}\sqrt{\det\left(\id_d+\varepsilon^{-2} \hat F^{\kappa}(\theta) \right)}\di \theta \, ,
\end{align*}
where $\hat c_d$ is a positive dimensional constant, and the last equality follows from the fact that the volume of $\Theta_\ell$ is equal to $\kappa^{-d}$
and that $\hat F^{(\kappa)}$ is constant on $\Theta_\ell$.

Summing this bound over $\ell=1,\ldots,m$, we conclude that (note that $V_\Theta=2^d$)
\begin{align*}
\mathcal N^{(\kappa)}(\varepsilon) 
&\leq \hat c_d2^d \sum_{\ell=1}^m\int_{\Theta_\ell}\sqrt{\det\left(\id_d+\varepsilon^{-2} \hat F^{(\kappa)}(\theta) \right)}\di \theta
=\hat c_d 4^d \frac{1}{V_{\Theta}}\int_{\Theta}\sqrt{\det\left(\id_d+\varepsilon^{-2} \hat F^{(\kappa)}(\theta) \right)}\di \theta\, .
\end{align*}
Applying this bound with $\eps=\sqrt{2\pi \log n/(\gamma n)}$ and recalling the definition of the effective dimension, the result follows.
\end{proof}

\begin{proof}[Proof of Theorem~\ref{thm_generalisation_bound}]
We start be noting that Lemma~\ref{lem_Lipschitz} remains valid for the discretised setting and under the assumption of Theorem~\ref{thm_generalisation_bound},\footnote{Lemma~\ref{lem_Lipschitz} does not require the full rank assumption of the Fisher information matrix.} i.e.,
\begin{align}
\mathbb{P}\bigg( \sup_{\theta \in \Theta} |R^{(\kappa)}(\theta) - R^{(\kappa)}_n(\theta)| \geq \eps \bigg) 
\leq 2\, \mathcal N^{(\kappa)}\left(\Big(\frac{\eps}{4M} \Big)^{1/\alpha} \right) \exp\left(-\frac{n \eps^2}{4B^2}\right) \, ,
\end{align}
where $\mathcal N^{(\kappa)}(\eps)$ denotes the number of balls of side length $\eps$, with respect to $\hat F^{(\kappa)}$, required to cover the parameter set $\Theta$. This can be seen by going through the proof of Lemma~\ref{lem_Lipschitz}.
Hence, by Lemma~\ref{lem_coveringED}, we find for $\eps= 4M \sqrt{2\pi \log n/(\gamma n)}$
\begin{align} 
&\mathbb{P}\bigg( \sup_{\theta \in \Theta} |R^{(\kappa)}(\theta) - R^{(\kappa)}_n(\theta)| \geq 4M \sqrt{2\pi \log n/(\gamma n)} \bigg) \nonumber \\
&\hspace{45mm}\leq 4 c_d  \left(\frac{\gamma n^{1/\alpha}}{2\pi \log n^{1/\alpha}} \right)^{\!\!\frac{d^{(\kappa)}_{\gamma,n^{1/\alpha}}}{2}}  \exp\left(-\frac{16 M^2\pi \log n}{ B^2\gamma }\right) \, . \label{eq_final_step_pf}
\end{align}
\end{proof}

\begin{remark}[How to choose the discretisation parameter $\kappa$]
In this remark we discuss conditions such that the generalisation bound of Theorem~\ref{thm_generalisation_bound} for the discretised model \smash{$\cM^{(k)}_{\Theta}$} is a good approximation to a generalisation bound of the original model $\cM_{\Theta}$. Assume that the model $\cM_{\Theta}$ satisfies an additional regularity assumption of the form $\|\nabla_{\theta} \hat F(\theta)\|\leq \Lambda$ for some $\Lambda \geq 0$ and for all $\theta \in \Theta$, then choosing the discretisation parameter $\kappa \gg \Lambda$ ensures that that \smash{$\cM_{\Theta} \approx \cM^{(\kappa)}_{\Theta}$} and \smash{$F(\theta) \approx F^{(\kappa)}(\theta)$}. Furthermore, $\sqrt{n}\gg \kappa$ is required to ensure that the balls used to cover each simplex of the triangulation are smaller than the size of each simplex. 
\end{remark}


\section{Numerical experiments}\label{appendix_numerics}

\subsection{Sensitivity analysis for the effective dimension}\label{appendix_sensitivity}
We use Monte Carlo sampling to estimate the effective dimension. The capacity results will, thus, be sensitive to the number of data samples used in estimating the Fisher information matrix for a given $\theta$, and to the number of $\theta$ samples then used to calculate the effective dimension. We plot the normalised effective dimension with $n$ fixed, in Figure~\ref{fig_sensitivity} over an increasing number of data and parameter samples using the classical feedforward model. For networks with less trainable parameters, $d$, the results stabilise with as little as $40$ data and parameter samples. When higher dimensions are considered, the standard deviation around the results increases, but $100$ data and parameter samples are still reasonable given that we consider a maximum of $d = 100$. For higher $d$, it is likely that more samples will be needed. 
\begin{figure}[!htb]
\centering
\includegraphics[scale=0.35]{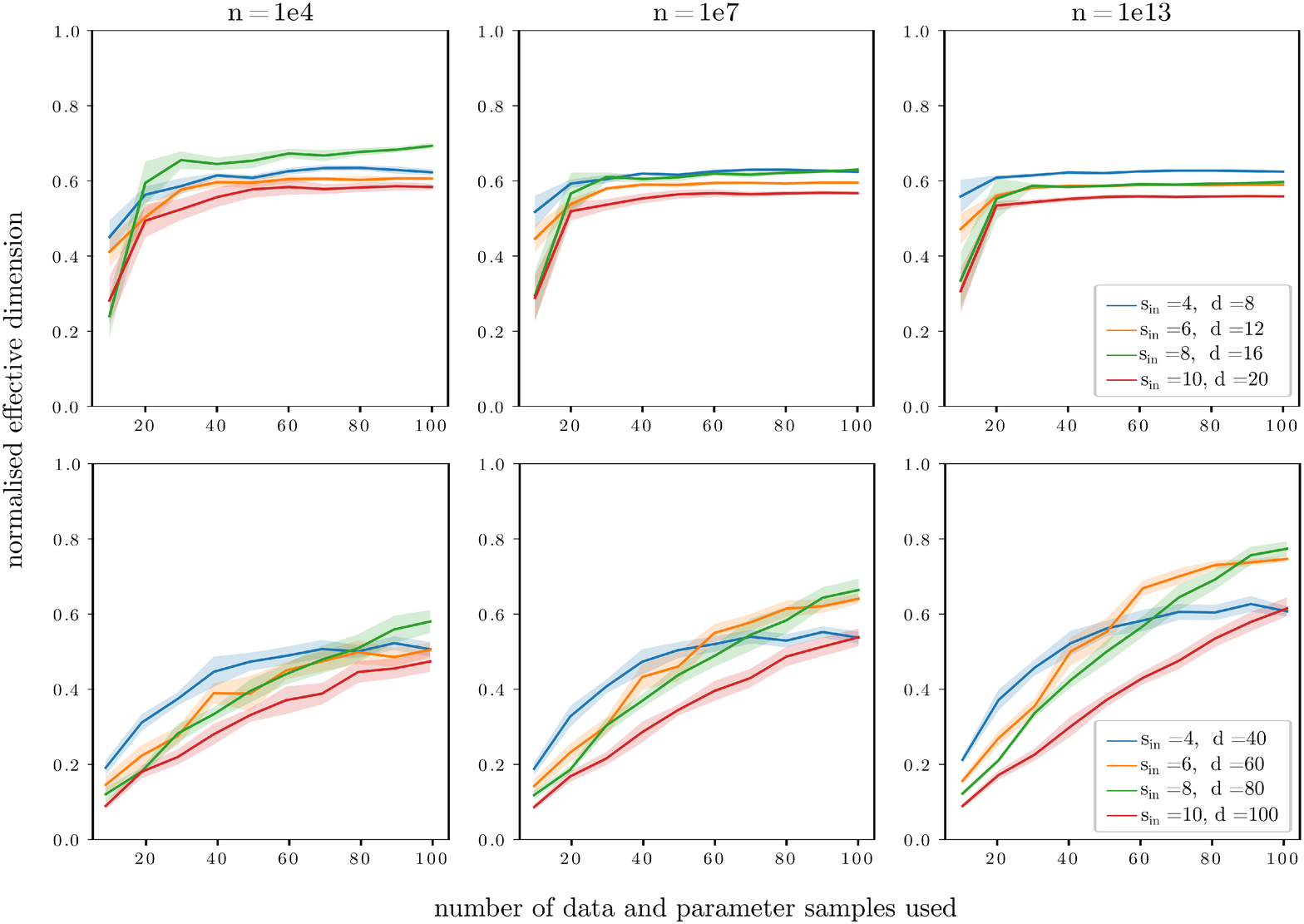}\caption{\textbf{Sensitivity analysis} of the normalised effective dimension to different numbers of data and parameter samples, used in calculating the empirical Fisher information matrix, and subsequently, the effective dimension. }\label{fig_sensitivity}
\end{figure}

\subsection{The Fisher information spectra for varying model size}\label{appendix_spectra}

Figure \ref{fig_spectra} plots the average distribution of the Fisher information eigenvalues for all model types, over increasing input size, $s_{\mathrm{in}}$, and hence, increasing number of parameters, $d$. These average distributions are generated using $100$ Fisher information matrices with parameters, $\theta$, drawn uniformly at random on $\Theta = [-1, 1]^d$. Row A contains the histograms for models with $s_{\mathrm{in}} = 6$, row B for $s_{\mathrm{in}} = 8$ and row C for $s_{\mathrm{in}} = 10$. In all scenarios, the classical model has a majority of its eigenvalues near or equal to zero, with a few very large eigenvalues. The easy quantum model has a somewhat uniform spectrum for a smaller input size, but this deteriorates as the input size (also equal to the number of qubits in this particular model) increases. The quantum neural network, however, maintains a more uniform spectrum over increasing $s_{\mathrm{in}}$ and $d$, showing promise in avoiding unfavourable qualities, such as barren plateaus.

\begin{figure}[!htb]
\centering
\includegraphics[scale=0.35]{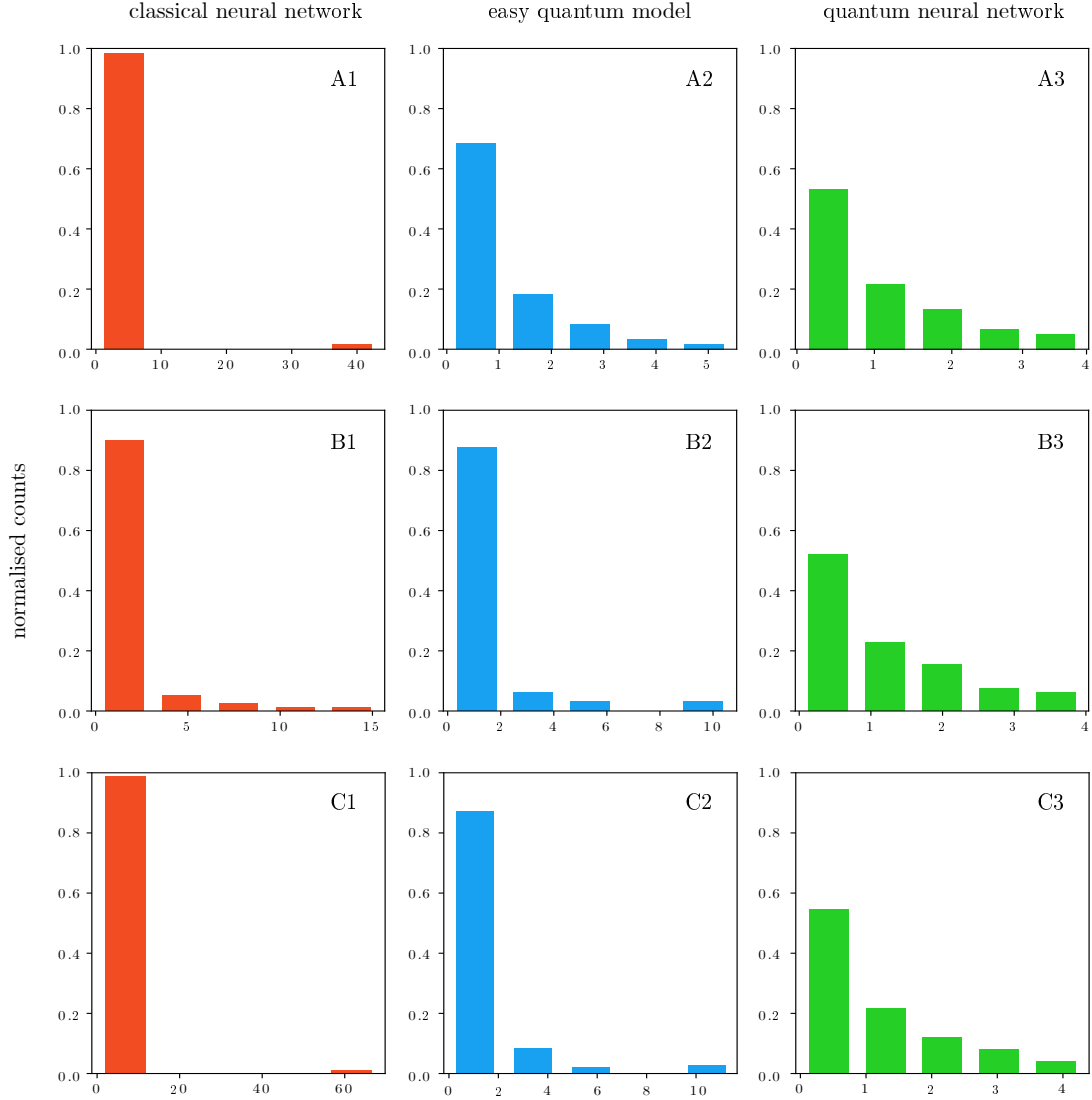}\caption{\textbf{Average Fisher information spectrum} plotted as a histogram for all three model types, over increasing input size, $s_{\mathrm{in}}$. Row A contains models with $s_{\mathrm{in}} = 6$ and $d = 60$, row B has $s_{\mathrm{in}} = 8$ and $d = 80$ and row C has $s_{\mathrm{in}} = 10$ and $d = 100$. In all cases, $s_{\mathrm{out}} = 2$.}\label{fig_spectra}
\end{figure}

\subsection{Training the models using a simulator}\label{appendix_training}
To test the trainability of all three model types, we conduct a simple experiment using the \texttt{Iris} dataset. In each model, we use an input size of $s_{\mathrm{in}} = 4$, output size $s_{\mathrm{out}} = 2$ and $d = 8$ trainable parameters. We train the models for $100$ training iterations, using $100$ data points from the first two classes of the dataset. Standard hyperparameter choices are made, using an initial learning rate $= 0.1$ and the \texttt{ADAM} optimiser. Each model is trained $100$ times, with initial parameters $\theta$ sampled uniformly on $\Theta = [-1, 1]^d$ each trial.\footnote{We choose $\Theta = [-1, 1]^d$ as the sample space for the initial parameters, as well as for the parameter sample space in the effective dimension. Another convention is to use $[-2\pi, 2\pi]^d$ as the parameter space for initialisation of the quantum model, however, we stick with $[-1, 1]^d$ to be consistent and align with classical neural network literature. We note that for the effective dimension, using either parameter space does affect the observed results.} The average training loss and average Fisher-Rao norm after 100 training iterations, is captured in Table~\ref{tab_train}. The quantum neural network notably has the highest Fisher-Rao norm and lowest training loss on average.

\begin{table}[!htb]
\centering
\bgroup
\def\arraystretch{1.5}
\begin{tabular}{l|c|c}
Model & Training loss & Fisher-Rao norm \\ \hline 
 Classical neural network & $37.90\%$ &$46.45$ \\
Easy quantum model & $43.05\%$ & $104.89$ \\
Quantum neural network & $23.14\%$ & $117.84$ \\
\end{tabular}\caption{\textbf{Average training loss} and \textbf{average Fisher-Rao norm} for all three models, using $100$ different trials with $100$ training iterations.}\label{tab_train}
\egroup
\end{table}

\subsection{Training the quantum neural network on real hardware}\label{appendix_hardware}
The hardware experiment is conducted on the \texttt{ibmq\_montreal 27-qubit} device. We use $4$ qubits with linear connectivity to train the quantum neural network on the first two classes of the \texttt{Iris} dataset. We deploy the same training specifications as in Appendix~\ref{appendix_training} and randomly initialise the parameters. Once the training loss stabilises, i.e.~the change in the loss from one iteration to the next is small, we stop the hardware training. This occurs after roughly $33$ training steps. The results are contained in Figure~\ref{fig:loss} and the real hardware shows remarkable performance relative to all other models. Due to limited hardware availability, this experiment is only run once and an analysis of the hardware noise and the spread of the training loss for differently sampled initial parameters would make these results more robust. 

We plot the circuit that is implemented on the quantum device in Figure~\ref{fig_hardware_qnn}. As in the quantum neural network discussed in Appendix~\ref{appendix_models}, the circuit contains parameterised $\mathrm{RZ}$ and $\mathrm{RZZ}$ rotations that depend on the data, as well as parameterised $\mathrm{RY}$-gates with $8$ trainable parameters. Note the different entanglement structure presented here as opposed to the circuits in Figures~\ref{fig_hard_fm} and~\ref{fig_var_circ}. This is to reduce the number of $\mathrm{CNOT}$-gates required, in order to incorporate current hardware constraints. The full circuit repeats the feature map encoding once before the variational form is applied.

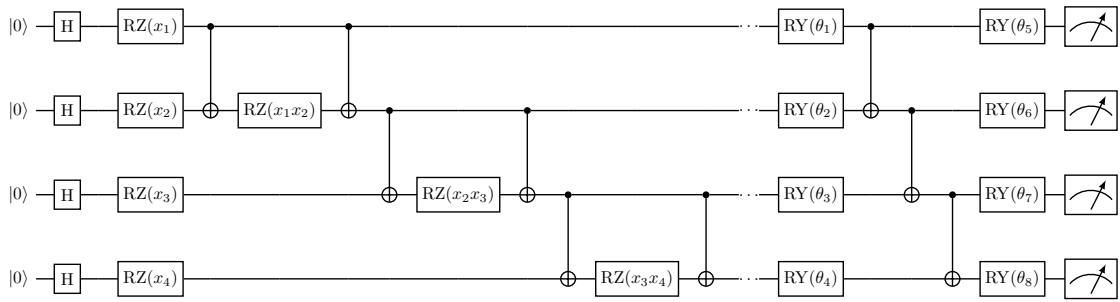
\begin{figure}[!htb]
\begin{tikzpicture}
\node[scale=0.65]{
\begin{tikzcd}[row sep=0.8cm]
\lstick{$\ket{0}$}  & \gate{\mathrm{H}} & \qw & \gate{\mathrm{RZ}(x_1)}  & \ctrl{1} & \qw & \ctrl{1} & \qw & \qw & \qw & \qw & \qw & \qw & \qw \dots & \gate{\mathrm{RY}(\theta_1)} & \ctrl{1} & \qw & \qw & \gate{\mathrm{RY}(\theta_5)} & \meter{} \\
\lstick{$\ket{0}$}  & \gate{\mathrm{H}} & \qw & \gate{\mathrm{RZ}(x_2)} & \targ{} & \gate{\mathrm{RZ}(x_1x_2)} & \targ{} & \ctrl{1} & \qw & \ctrl{1} & \qw & \qw & \qw & \qw \dots &  \gate{\mathrm{RY}(\theta_2)} & \targ{} & \ctrl{1} & \qw & \gate{\mathrm{RY}(\theta_6)} & \meter{}\\
\lstick{\ket{0}} & \gate{\mathrm{H}} & \qw & \gate{\mathrm{RZ}(x_3)} & \qw & \qw  & \qw & \targ{} & \gate{\mathrm{RZ}(x_2x_3)} & \targ{} & \ctrl{1} & \qw & \ctrl{1} & \qw \dots& \gate{\mathrm{RY}(\theta_3)} & \qw & \targ{} & \ctrl{1} & \gate{\mathrm{RY}(\theta_7)} & \meter{}\\
\lstick{\ket{0}} \qw & \gate{\mathrm{H}}  & \qw & \gate{\mathrm{RZ}(x_4)} & \qw & \qw & \qw & \qw & \qw & \qw & \targ{} & \gate{\mathrm{RZ}(x_3x_4)} & \targ{} & \qw \dots & \gate{\mathrm{RY}(\theta_4)} & \qw & \qw & \targ{} & \gate{\mathrm{RY}(\theta_8)} & \meter{}\\
\end{tikzcd}
};
\end{tikzpicture}\caption{\textbf{Circuit implemented on quantum hardware.} First, Hadamard gates are applied. Then the data is encoded using $\mathrm{RZ}$-gates applied to each qubit whereby the $Z$-rotations depend on the feature values of the data. Thereafter, $\mathrm{CNOT}$ entangling layers with $\mathrm{RZ}$-gates encoding products of feature values are applied. The data encoding gates, along with the $\mathrm{CNOT}$-gates are repeated to create a depth $2$ feature map. Lastly, parameterised $\mathrm{RY}$-gates are applied to each qubit followed by linear entanglement and a final layer of parameterised $\mathrm{RY}$-gates. The circuit has a total of $8$ trainable parameters.}
\label{fig_hardware_qnn}
\end{figure}

\bibliographystyle{arxiv_no_month}
\bibliography{notes}

\end{document}